\newtheorem{theorem}{Theorem}[section]
\newtheorem{lemma}[theorem]{Lemma}
\newtheorem{proposition}[theorem]{Proposition}
\newtheorem{definition}[theorem]{Definition}
\newtheorem{remark}[theorem]{Remark}
\newtheorem{assumption}[theorem]{Assumption}
\numberwithin{equation}{section}                            
 \newcommand{\II}{\mathcal{I}_d}
\newcommand{\R}{{\mathbb{R}}}
\newcommand{\N}{{\mathbb{N}}}
\newcommand{\eg}{{\it e.g.}}
\newcommand\norm[1]{\left\lVert#1\right\rVert}
\newcommand{\KK}{\mathcal{K}_{\infty}}
\newcommand{\Let}{:=}
\newcommand{\Z}{\mathbb{Z}}
\newcommand{\intcc}[1]{\ensuremath{{\left[#1\right]}}}
\begin{document}
	
	\title[Compositional Synthesis of Finite Abstractions for Networks of Systems:A Small-Gain Approach]	{Compositional Synthesis of Finite Abstractions for Networks of Systems: A Small-Gain Approach}

		\author{Abdalla Swikir$^1$}
	\author{Majid Zamani$^{2,3}$}
	\address{$^1$Hybrid Control Systems Group, Technical University of Munich, Germany.}
	\email{abdalla.swikir@tum.de}
	\address{$^2$Computer Science Department, University of Colorado Boulder, USA.}
	\address{$^3$Computer Science Department, Ludwig Maximilian University of Munich, Germany.}
	\email{majid.zamani@colorado.edu}
	\maketitle


\begin{abstract}                          
	In this paper, we introduce a compositional scheme for the construction of finite abstractions (a.k.a. symbolic models) of
interconnected discrete-time control systems. The compositional scheme is based on small-gain type reasoning. In particular, we use a notion of so-called alternating simulation functions as a relation between each subsystem and its symbolic model. Assuming some small-gain type conditions, we construct compositionally an overall alternating simulation function as a relation between an interconnection of symbolic models and that of original control subsystems. In such compositionality reasoning, the gains associated with the alternating simulation functions of the subsystems satisfy a certain ``small-gain" condition.
In addition, we introduce a technique to construct symbolic models together with their corresponding alternating simulation functions for discrete-time control subsystems under some stability property. 

Finally, we apply our results to the temperature regulation in a circular building by constructing compositionally a finite abstraction of a network containing $N$ rooms for any $N\geq3$. We use the constructed symbolic models as substitutes to synthesize controllers compositionally maintaining room temperatures in a comfort zone. We choose $N=1000$ for the sake of illustrating the results. We also apply our proposed techniques to a nonlinear example of fully connected network in which the compositionality condition still holds for any number of components.
In these case studies, we show the effectiveness of the proposed results  in comparison with the existing compositionality technique in the literature using a dissipativity-type reasoning.
\end{abstract}

In general, designing complex systems with respect to sophisticated control objectives is a challenging problem. 
In the past few years, several techniques have been developed to overcome those challenges.
One particular approach to address complex systems and control objectives is based on the construction of
finite abstractions (a.k.a. symbolic models) of the original control systems. Finite abstractions provide abstract descriptions of the continuous-space control systems in which each discrete state and input correspond to an aggregate of continuous states and inputs of the original system, respectively. 

In general, there exist two types of symbolic models: \emph{sound} ones whose behaviors (approximately) contain those of the concrete systems and \emph{complete} ones whose behaviors are (approximately) equivalent to those of the concrete systems \cite{Tabu}. Remark that existence of a complete symbolic model results in a sufficient and necessary guarantee in the sense that there exists a controller enforcing the desired specifications on the symbolic model \emph{if and only if} there exists a controller enforcing the same specifications on the
original control system. On the other hand, a sound symbolic model provides only a sufficient guarantee in the sense that failing to find a controller for the desired specifications on the symbolic model does not prevent the existence of a controller for the original control system. Since symbolic models are finite, controller synthesis problems can be algorithmically solved over them by resorting to automata-theoretic approaches \cite{MalerPnueliSifakis95,Thomas95}. 
Unfortunately, the construction of
symbolic models for large-scale interconnected systems is itself computationally a complex and challenging task.
An appropriate technique to overcome this challenge is to first construct symbolic models of the concrete subsystems individually and then establish a compositional framework using which one can construct abstractions of the overall network using those individual abstractions.

In the past few years, there have been several results on the compositional construction of finite abstractions of networks of control subsystems.
The framework introduced in \cite{Tazaki2008} based on the notion of interconnection-compatible approximate bisimulation relation provides networks of finite abstractions
approximating networks of stabilizable linear control systems. This work was extended in \cite{7403879} to networks of incrementally input-to-state stable nonlinear control
systems using the notion of approximate bisimulation relation. The recent result in \cite{Majumdar} introduces a new system relation, called (approximate) disturbance
bisimulation relation, as the basis for the compositional construction of symbolic models. Note that the proposed results in \cite{Tazaki2008,7403879,Majumdar} use the small-gain type conditions and provide \emph{complete} symbolic models of interconnected systems compositionally. The recent results in \cite{arxiv} introduce different conditions to handle the compositional construction of complete finite abstractions by leveraging techniques from dissipativity theory \cite{murat}. There are also other results in the literature \cite{meyer,omar,Kim} which provide sound symbolic models of interconnected systems, compositionally, without requiring any stability property or condition on the gains of subsystems.

In this work, we introduce a compositional approach for the construction of complete finite abstractions of interconnected nonlinear discrete-time control systems using more general small-gain type conditions. First, we introduce a notion of so-called alternating simulation functions inspired by Definition 1 in \cite{Girard20} as a system relation. Given alternating simulation functions between subsystems and their finite abstractions, we derive some small-gain type conditions to construct an overall alternating simulation function as a relation between the interconnected abstractions and the concrete network.
In addition, we provide a framework for the construction of finite abstractions together with their corresponding alternating simulation functions for discrete-time control systems satisfying incremental input-to-state stabilizability property \cite{angeli}. Finally, we illustrate our results by compositionally constructing finite abstractions of two networks of (linear and nonlinear) discrete-time control subsystems and their corresponding alternating simulation functions. These case studies particularly elucidate the effectiveness of the proposed results in comparison with the existing compositional result using dissipativity-type conditions in \cite{arxiv}. 

One can leverage the compositionally constructed finite abstractions here to synthesize controllers monolithically or also compositionally (see \cite[and references therein]{meyer}) to achieve some high-level properties. In particular, once finite abstractions are constructed for given concrete subsystems along with the corresponding alternating simulation functions, one can design local controllers also compositionally
	for those abstractions, and then refine them to the concrete subsystems provided that the given specification for the overall network is decomposable (see the first case study). Particularly, based on the assume-guarantee reasoning approach \cite{Rajamani}, the local controllers are synthesized by assuming that the other subsystems meet their local specifications. 

\textbf{ Related Work.} Results in \cite{Tazaki2008,7403879,Majumdar} use the small-gain type conditions (\cite[condition (17)]{Tazaki2008}, \cite[condition $r(A_k^{-1}C_k)<1$ in Theorem 1]{7403879}, and \cite[condition (22)]{Majumdar}) to facilitate the compositional construction of complete finite abstractions. Unfortunately, those small-gain type conditions are \emph{conservative}, in the sense that they are all formulated in terms of ``almost" linear gains, which means the considered subsystems should have a (nearly) linear behavior. Those conditions may not hold in general for systems with nonlinear gain functions (cf. Remark \ref{sgcv} in the paper). Here, we introduce more general small-gain type compositional conditions formulated in a general nonlinear form which can be applied to both linear and nonlinear gain functions without making any pre-assumptions on them. In addition, assuming a fully connected network, in the proposed compositionality results in \cite{Tazaki2008,7403879,Majumdar,arxiv} 
	the overall approximation error is either proportional to the summation of the approximation errors of finite abstractions of subsystems or lower bounded by the summation of positive and strictly increasing functions of quantization parameters of all subsystems. On the other hand, in the proposed results here the overall approximation error is proportional to the maximum of the approximation errors of finite abstractions of subsystems which are determined independently of the number of subsystems. Therefore, the results here can potentially provide complete finite abstractions for large-scale interconnected systems with much smaller approximation error in comparison with those proposed in \cite{Tazaki2008,7403879,Majumdar,arxiv} (cf. case studies for a comparison with \cite{arxiv}). 

\section{Notation and Preliminaries}\label{1:II}
\subsection{Notation}
We denote by $\R$, $\Z$, and $\N$ the set of real numbers, integers, and non-negative integers,  respectively.
These symbols are annotated with subscripts to restrict them in
the obvious way, \eg, $\R_{>0}$ denotes the positive real numbers. We denote the closed, open, and half-open intervals in $\R$ by $[a,b]$,
$(a,b)$, $[a,b)$, and $(a,b]$, respectively. For $a,b\in\N$ and $a\le b$, we
use $[a;b]$, $(a;b)$, $[a;b)$, and $(a;b]$ to
denote the corresponding intervals in $\N$.
Given $N\in\N_{\ge1}$, vectors $\nu_i\in\R^{n_i}$, $n_i\in\N_{\ge1}$, and $i\in[1;N]$, we
use $\nu=[\nu_1;\ldots;\nu_N]$ to denote the vector in $\R^n$ with
$n=\sum_i n_i$ consisting of the concatenation of vectors~$\nu_i$. Note that given any  $\nu\in\R^{n}$, $\nu \ge 0$ if $\nu_i \ge 0$ for any $i \in [1;n]$.
We denote the identity and zero matrix in $\R^{n\times n}$ by $I_n$ and $0_n$, respectively. 
The individual elements in a matrix $A\in \R^{m\times n}$, are denoted by $\{A\}_{ij}$, where  $i\in\intcc{1;m}$ and $j\in\intcc{1;n}$. We denote by $\norm{\cdot}$ and $\norm{\cdot}_2$ the infinity and Euclidean norm, respectively. Given any $a\in\R$, $\vert a\vert$ denotes the absolute value of $a$. Given sets $X$ and $Y$, we denote by $f:X\rightarrow Y$ an ordinary map of $X$ into $Y$, whereas $f:X\rightrightarrows Y$ denotes a set-valued map \cite{Rock0000}.
Given a function $f:\R^n\to \R^m$ and $\overline x\in\R^m$, we use $f\equiv \overline x$ to denote that $f(x)=\overline x$ for all $x\in\R^n$. If $\overline x$ is the zero vector, we simply write $f\equiv 0$. 
The identity map on a set $S$ is denoted by $1_{S}$. We denote by $|\cdot|$ the cardinality of a given set and by $\emptyset$ the empty set. 
{A set \mbox{$S\subseteq\R^n$} is a finite union of boxes if $S=\bigcup_{j=1}^MS_j$ for some $M\in\N$, where $S_j=\prod_{i=1}^n [c_i^j,d_i^j]\subseteq \R^n$ with $c^j_i<d^j_i$. For any set \mbox{$S\subseteq\R^n$} of the form of finite union of boxes, we define \mbox{$[S]_{\eta}=\{a\in S\,\,|\,\,a_{i}=k_{i}\eta,k_{i}\in\mathbb{Z},i=1,\ldots,n\}$}, where $0<\eta\leq\emph{span}(S)$, $\emph{span}(S)=\min_{j=1,\ldots,M}\eta_{S_j}$, \mbox{$\eta_{S_j}=\min\{|d_1^j-c_1^j|,\ldots,|d_n^j-c_n^j|\}$}. With a slight abuse of notation, we use $[S]_{0}:=S$. The set $[S]_{\eta}$ will be used as a finite approximation of the set $S$ with precision $\eta>0$. Note that $[S]_{\eta}\neq\varnothing$ for any $\eta\leq\emph{span}(S)$}. Given sets $S$ and $[S]_{\eta}$, $\vartheta_{\eta}:S\rightarrow [S]_{\eta}$ is an approximation map  that assigns for any $x\in S$ a representative point $\hat{x}\in [S]_{\eta}$ such that $\Vert x-\hat{x}\Vert<\eta$.
Given sets $U$ and $S\subset U$, the complement of $S$ with respect to $U$ is defined as $U\backslash S = \{x : x \in U, x \notin S\}.$
We use notations $\mathcal{K}$ and $\mathcal{K}_\infty$
to denote different classes of comparison functions, as follows:
$\mathcal{K}=\{\alpha:\mathbb{R}_{\geq 0} \rightarrow \mathbb{R}_{\geq 0} |$ $ \alpha$ is continuous, strictly increasing, and $\alpha(0)=0\}$; $\mathcal{K}_\infty=\{\alpha \in \mathcal{K} |$ $ \lim\limits_{r \rightarrow \infty} \alpha(r)=\infty\}$.
For $\alpha,\gamma \in \mathcal{K}_{\infty}$ we write $\alpha<\gamma$ if $\alpha(s)<\gamma(s)$ for all $s>0$, and $\mathcal{I}_d\in\mathcal{K}_{\infty}$ denotes the identity function.
\subsection{Discrete-Time Control Systems} 
In this paper we study discrete-time control systems of the following form.
\begin{definition}\label{def:sys1}
	A discrete-time control system $\Sigma$ is defined by the tuple	$\Sigma=(\mathbb X,\mathbb U,\mathbb W,\mathcal{U},\mathcal{W},f,\mathbb Y,h)$,
	where $\mathbb X, \mathbb U, \mathbb W,$ and $\mathbb Y$ are the state set, external input set, internal input set, and output set, respectively, and are assumed to be subsets of normed vector spaces with appropriate finite dimensions. Sets $\mathcal{U}$ and $\mathcal{W}$ denote the set
	of all bounded input functions $\nu:\N\rightarrow \mathbb U$ and $\omega:\N\rightarrow \mathbb W$, respectively. The set-valued map $f: \mathbb X\times \mathbb U \times \mathbb W\rightrightarrows \mathbb X $ is called the transition function, and $h:\mathbb X \rightarrow \mathbb Y$  is the output map.
	The discrete-time control system $\Sigma $ is described by difference inclusions of the form
	\begin{align}\label{eq:2}
	\Sigma:\left\{
	\begin{array}{rl}
	\mathbf{x}(k+1)\in& f(\mathbf{x}(k),\nu(k),\omega(k)),\\
	\mathbf{y}(k)=&h(\mathbf{x}(k)),
	\end{array}
	\right.
	\end{align}
	where $\mathbf{x}:\mathbb{N}\rightarrow \mathbb X $, $\mathbf{y}:\mathbb{N}\rightarrow \mathbb Y$, $\nu\in\mathcal{U}$, and $\omega\in\mathcal{W}$ are the state signal, output signal, external input signal, and internal input signal, respectively.
	
	System $\Sigma=(\mathbb X,\mathbb U,\mathbb W,\mathcal{U},\mathcal{W},f,\mathbb Y,h)$ is called deterministic if $|f(x,u,w)|\leq1$ $ \forall x\in \mathbb X, \forall u\in \mathbb U, \forall w \in \mathbb W$, and non-deterministic otherwise. System $\Sigma$ is called blocking if $\exists x\in \mathbb X, \forall u\in \mathbb U, \forall w \in \mathbb W $ where $|f(x,u,w)|=0$ and non-blocking if $|f(x,u,w)|\neq 0$ $ \forall x\in \mathbb X, \exists u\in \mathbb U, \exists w \in \mathbb W$.  System $\Sigma$ is called finite if $\mathbb X,\mathbb U,\mathbb W$ are finite sets and infinite otherwise. In this paper, we only deal with non-blocking systems.
\end{definition}	

Now, we introduce a notion of so-called alternating simulation functions, inspired by Definition 1 in \cite{Girard20}, which quantifies the error between systems $\Sigma$ and $\hat{\Sigma}$ both with internal inputs.
\begin{definition}\label{def:SFD1}
	Let $\Sigma=(\mathbb X,\mathbb U,\mathbb W,\mathcal{U},\mathcal{W},f,\mathbb Y,h)$ and $\hat{\Sigma}=(\hat{\mathbb{X}},\hat{\mathbb{U}},\hat{\mathbb{W}},\hat{\mathcal{U}},\hat{\mathcal{W}},\hat{f},\hat{\mathbb{Y}},\hat{h})$ where $\hat{\mathbb{W}}\subseteq{\mathbb{W}}$ and $\hat{\mathbb{Y}}\subseteq{\mathbb{Y}}$. A function $ V:\mathbb X\times \mathbb{\hat{X}} \to \mathbb{R}_{\geq0} $ is called an alternating simulation function from $\hat{\Sigma}$ to $\Sigma$ if $\forall x\in \mathbb X$ and $\forall \hat x\in\mathbb{\hat{X}}$,  one has
	\begin{align}\label{e:SFC11}
	\alpha (\Vert h(x)-\hat{h}(\hat{x})\Vert ) \leq V(x,\hat{x}),
	\end{align}
	and $\forall x\in \mathbb X $, $\forall \hat x\in \mathbb{\hat{X}}$, $\forall \hat u\in\mathbb{\hat{U}}$, $\exists u\in\mathbb U$,  $\forall w\in\mathbb W$, $\forall \hat w\in\mathbb{\hat{W}}$, $\forall x_{d} \in f(x,u,w), $ $\exists\hat{x}_{d} \in \hat{f}(\hat{x},\hat{u},\hat{w})$ such that one gets
	\begin{align}\label{e:SFC22}
	V&(x_{d},\hat{x}_{d})\\\notag
	&\leq \max\{\sigma(V(x,\hat{x})),\rho_{int}(\Vert w- \hat{w}\Vert ),\rho_{ext}(\Vert \hat{u}\Vert ),\varepsilon\},
	\end{align}
	for some $\alpha, \sigma,\rho_{int} \in \mathcal{K}_{\infty}$, where $\sigma<\mathcal{I}_d$, $\rho_{ext} \in \mathcal{K}_{\infty}\cup \{0\} $, and some $\varepsilon\in \mathbb{R}_{\geq 0}$.
\end{definition}
Let us point out some differences between our notion of alternating simulation
	function and the one in Definition~1 in~\cite{Girard20}. The notion of simulation
	function in \cite[Definition~1]{Girard20} is defined between two continuous-time control systems, whereas in Definition \ref{def:SFD1}, we define the alternating simulation
	function between two discrete-time control systems.    
	Moreover, there is no
	distinction between internal and external inputs in \cite[Definition~1]{Girard20}, whereas their distinctions in our work play a major role in providing the compositionality results later in the paper. Additionally, on the
	right-hand-side of \eqref{e:SFC22}, we introduce constant $\varepsilon\in \mathbb{R}_{\geq 0}$ to allow the relation to be defined between two (in)finite systems. The role of this constant will become clear in Section \ref{1:IV} where we introduce
	finite systems. Such a constant does not appear in \cite[Definition~1]{Girard20} which makes it only suitable for infinite systems.    
	Furthermore, we formulate the decay condition \eqref{e:SFC22} in a \emph{max-form}, while in \cite{Girard20} the decay condition is formulated in an \emph{implication-form}.

If $\Sigma$ does not have internal inputs, which is the case for interconnected systems (cf. Definition \ref{def:5}), Definition \ref{def:sys1} reduces to the tuple $\Sigma=(\mathbb X,\mathbb U,\mathcal{U},f,\mathbb Y,h)$ and the set-valued map $f$ becomes $f:\mathbb X\times\mathbb U\rightrightarrows\mathbb X$. Correspondingly, \eqref{eq:2} reduces to:
\begin{align}\label{eq:3}
\Sigma:\left\{
\begin{array}{rl}
\mathbf{x}(k+1)\in&f(\mathbf{x}(k),\nu(k)),\\
\mathbf{y}(k)=&h(\mathbf{x}(k)).
\end{array}\right.
\end{align}
Moreover, Definition \ref{def:SFD1} reduces to the following definition.
\begin{definition}\label{def:SFD2}
	Consider systems $\Sigma\!=(\mathbb X,\mathbb U,\mathcal{U},f,\mathbb Y,h)$ and  $\hat{\Sigma}\!=(\hat{\mathbb{X}},\hat{\mathbb{U}},\hat{\mathcal{U}},\hat{f},\hat{\mathbb{Y}},\hat{h}),$ where $\hat{\mathbb{Y}}\subseteq{\mathbb{Y}}$. A function $ \tilde{V}:\mathbb X\times \mathbb{\hat{X}} \to \mathbb{R}_{\geq0} $ is called an alternating simulation function from $\hat{\Sigma}$ to $\Sigma$ if $\forall x\in \mathbb X$ and $\forall \hat x\in\mathbb{\hat{X}}$, one has
	\begin{align}\label{e:SFC11a}
	\tilde{\alpha} (\Vert h(x)-\hat{h}(\hat{x})\Vert ) \leq \tilde{V}(x,\hat{x}),
	\end{align}
	and $\forall x\in \mathbb X $, $\forall \hat x\in \mathbb{\hat{X}}$, $\forall \hat u\in\mathbb{\hat{U}}$, $\exists u\in\mathbb U$, $\forall x_{d} \in f(x,u), $ $\exists\hat{x}_{d} \in \hat{f}(\hat{x},\hat{u})$ such that one gets
	\begin{align}\label{e:SFC22b}
	\tilde V(x_{d},\hat{x}_{d})\leq \max\{\tilde{\sigma}(\tilde{V}(x,\hat{x})),\tilde{\rho}_{ext}(\Vert \hat{u}\Vert ),\tilde{\varepsilon}\},
	\end{align}
	for some $\tilde{\alpha}, \tilde{\sigma} \in \mathcal{K}_{\infty}$, where $\tilde{\sigma}<\mathcal{I}_d$, $  \tilde{\rho}_{ext} \in \mathcal{K}_{\infty}\cup \{0\} $, and some $\tilde{\varepsilon}\in \mathbb{R}_{\geq 0}$.
\end{definition}
We say that a system $\hat{\Sigma} $ is approximately alternatingly simulated by a system $\Sigma $ or a system $\Sigma $ approximately alternatingly simulates a system $\hat{\Sigma} $, denoted by $\hat{\Sigma} \preceq _{\mathcal{AS}}  \Sigma$, if there
exists an alternating simulation function from $\hat{\Sigma} $ to $\Sigma $ as in Definition \ref{def:SFD2}.

We refer the interested readers to Section 3.2 in \cite{pt09} justifying in details the role of different quantifiers appeared before condition \eqref{e:SFC22b} in Definition \ref{def:SFD2} (condition \eqref{e:SFC22} in Definition \ref{def:SFD1}). In brief, those quantifiers capture the different role played by control inputs as well as nondeterminism in the system.

The next result shows that the existence of an alternating simulation function for systems without internal inputs implies the existence of an approximate alternating simulation relation between them as defined in \cite{Tabu}.	
\begin{proposition}
	Consider systems $\Sigma\!=(\mathbb X,\mathbb U,\mathcal{U},f,\mathbb Y,h)$ and  $\hat{\Sigma}\!=(\hat{\mathbb{X}},\hat{\mathbb{U}},\hat{\mathcal{U}},\hat{f},\hat{\mathbb{Y}},\hat{h}),$ where $\hat{\mathbb{Y}}\subseteq{\mathbb{Y}}$. Assume $\tilde V$ is an alternating simulation function from $\hat{\Sigma}$ to $\Sigma$ as in Definition \ref{def:SFD2} and that there exists $v\in \R_{>0}$ such that $\Vert \hat{u} \Vert \leq v$ $ \forall\hat{u} \in \mathbb{\hat{U}}$. Then, relation $R\subseteq\mathbb{X}\times \hat{\mathbb{X}}$ defined by $$R=\left\{(x,\hat{x})\in \mathbb{X}\times \hat{\mathbb{X}}|\tilde{V}(x,\hat{x})\leq \max\left\{\tilde{\rho}_{ext}(v),\tilde{\varepsilon}\right\}\right\}$$ is an $\hat\varepsilon$-approximate alternating simulation relation, defined in \cite{Tabu}, from $\hat{\Sigma}$ to $\Sigma$ with 
	\begin{align}\label{er}
	\hat\varepsilon=\tilde{\alpha}^{-1}(\max\{\tilde{\rho}_{ext}(v),\tilde{\varepsilon}\}).
	\end{align}
\end{proposition}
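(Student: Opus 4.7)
The plan is to verify the two defining properties of an $\hat\varepsilon$-approximate alternating simulation relation (as in \cite{Tabu}) directly from the two properties of the alternating simulation function in Definition \ref{def:SFD2}, using the bound on $\|\hat u\|$ to dominate the $\tilde\rho_{ext}$-term and the strict decay $\tilde\sigma<\mathcal{I}_d$ to keep the relation invariant under the dynamics.

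First I would unpack $R$ and check the output condition. For any $(x,\hat x)\in R$, inequality \eqref{e:SFC11a} gives $\tilde\alpha(\|h(x)-\hat h(\hat x)\|)\leq \tilde V(x,\hat x)\leq \max\{\tilde\rho_{ext}(v),\tilde\varepsilon\}$, so applying $\tilde\alpha^{-1}\in\mathcal{K}_\infty$ yields $\|h(x)-\hat h(\hat x)\|\leq \hat\varepsilon$ with $\hat\varepsilon$ as in \eqref{er}. This handles the first requirement of an approximate alternating simulation relation.

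Next I would verify the alternating transition condition. Fix $(x,\hat x)\in R$ and $\hat u\in\hat{\mathbb U}$. By \eqref{e:SFC22b} choose $u\in\mathbb U$ such that for every $x_d\in f(x,u)$ there exists $\hat x_d\in\hat f(\hat x,\hat u)$ with
\begin{align*}
\tilde V(x_d,\hat x_d)\leq \max\{\tilde\sigma(\tilde V(x,\hat x)),\tilde\rho_{ext}(\|\hat u\|),\tilde\varepsilon\}.
\end{align*}
Since $(x,\hat x)\in R$ we have $\tilde V(x,\hat x)\leq \max\{\tilde\rho_{ext}(v),\tilde\varepsilon\}$, and because $\tilde\sigma<\mathcal{I}_d$ it follows that $\tilde\sigma(\tilde V(x,\hat x))\leq \tilde\sigma(\max\{\tilde\rho_{ext}(v),\tilde\varepsilon\})<\max\{\tilde\rho_{ext}(v),\tilde\varepsilon\}$. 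The assumption $\|\hat u\|\leq v$ together with monotonicity of $\tilde\rho_{ext}$ yields $\tilde\rho_{ext}(\|\hat u\|)\leq \tilde\rho_{ext}(v)$. Taking the maximum of these three bounds gives $\tilde V(x_d,\hat x_d)\leq \max\{\tilde\rho_{ext}(v),\tilde\varepsilon\}$, i.e.\ $(x_d,\hat x_d)\in R$, which is exactly the alternating transition requirement.

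The argument is essentially bookkeeping, so I do not expect a genuine obstacle; the only subtlety is handling the case $\tilde\rho_{ext}\equiv 0$ so that $\tilde\rho_{ext}(v)=0$ and the level set $R$ collapses to $\{\tilde V\leq \tilde\varepsilon\}$, but the same chain of inequalities still goes through. Care is also needed so that the quantifier pattern $\forall\hat u\,\exists u\,\forall x_d\,\exists\hat x_d$ used in \eqref{e:SFC22b} is matched verbatim with the quantifier pattern in the definition of $\hat\varepsilon$-approximate alternating simulation relation in \cite{Tabu}, which it is.
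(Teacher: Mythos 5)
Your proof is correct and follows essentially the same route as the paper's: both verify the output condition via \eqref{e:SFC11a} and the definition of $R$, and both establish invariance of $R$ under transitions via \eqref{e:SFC22b} together with $\tilde{\sigma}<\mathcal{I}_d$ and the bound $\Vert\hat{u}\Vert\leq v$. Your version merely spells out the three-way maximum more explicitly than the paper does.
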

\begin{proof}{Proof.}
	The proof consists of showing that $(i)$ $\forall (x,\hat x)\in R$ we have $\Vert h(x)-\hat{h}(\hat{x})\Vert \leq \hat\varepsilon$; $(ii)$ $\forall (x,\hat x)\in R $ and $\forall \hat u\in\mathbb{\hat{U}}$, $\exists u\in\mathbb U$, such that $\forall x_{d} \in f(x,u),$ $\exists\hat{x}_{d} \in \hat{f}(\hat{x},\hat{u})$ satisfying $(x_d,\hat x_d)\in R$. The first item is a simple consequence of the definition of $R$ and condition \eqref{e:SFC11a} (i.e. $\tilde{\alpha} (\Vert h(x)-\hat{h}(\hat{x})\Vert ) \leq\tilde V(x,\hat{x})\leq\max\{\tilde{\rho}_{ext}(v),\tilde{\varepsilon}\}$), which results in $\Vert h(x)-\hat{h}(\hat{x})\Vert \leq \tilde{\alpha}^{-1}(\max\{\tilde{\rho}_{ext}(v),\tilde{\varepsilon}\}=\hat\varepsilon$. The second item follows immediately from the definition of $R$, condition \eqref{e:SFC22b}, and the fact that $\tilde{\sigma}<\mathcal{I}_d$. In particular, we have $\tilde V(x_{d},\hat{x}_{d})\leq\max\{\tilde{\rho}_{ext}(v),\tilde{\varepsilon}\}$ which implies $(x_d,\hat{x}_d)\in R$.
\end{proof}
\section{Compositionality Result}\label{1:III}
\label{s:inter}
In this section, we analyze networks of discrete-time control subsystems and drive a general small-gain type condition under which we can construct an alternating simulation function from a network of abstractions to the concrete network by using alternating simulation functions of the subsystems. The
definition of the network of discrete-time control subsystems is based on the notion of
interconnected systems described in~\cite{Tazaki2008}.
\subsection{Interconnected Control Systems}
We consider $N\in\N_{\ge1}$ original control subsystems $$\Sigma_i=(\mathbb X_i,\mathbb U_i,\mathbb W_i,\mathcal{U}_i,\mathcal{W}_i,f_i,\mathbb Y_i,h_i),$$ $i\in[1;N]$,
with partitioned internal inputs as
\begin{align}\label{eq:int1}
w_i&=[w_{i1};\ldots;w_{i(i-1)};w_{i(i+1)};\ldots;w_{iN}],\\
\mathbb W_i&=\prod_{j=1}^{N-1} \mathbb W_{ij},
\end{align}
with output map and set partitioned as
\begin{align}\label{eq:int2}
h_i(x_i)=&[h_{i1}(x_i);\ldots;h_{iN}(x_i)],\\\label{eq:int22}
\mathbb Y_i=&\prod_{j=1}^N \mathbb Y_{ij}.
\end{align}
We interpret the outputs $y_{ii}$ as external ones, whereas $y_{ij}$ with $i\neq j$ are internal ones which are used to define the
interconnected systems.
In particular, we assume that the dimension of vector $w_{ij}$ is equal
to that of vector $y_{ji}$.
If there is no connection from subsystem $\Sigma_{i}$ to
$\Sigma_j$, we set $h_{ij}\equiv 0$. Now, we define the notions of interconnections for control subsystems.
\begin{definition}\label{def:5}
		Consider $N\in\N_{\ge1}$ control subsystems 
		$\Sigma_i=(\mathbb X_i,\mathbb U_i,\mathbb W_i,\mathcal{U}_i,\mathcal{W}_i,f_i,\mathbb Y_i,h_i)$,
		$i\in[1;N]$, with the input-output structure given
		by $\eqref{eq:int1}-\eqref{eq:int22}$. The \emph{interconnected control
			system} $\Sigma=(\mathbb X,\mathbb U,\mathcal{U},f,\mathbb Y,h)$,
		denoted by
	$\mathcal{I}_{\mathcal M}(\Sigma_1,\ldots,\Sigma_N)$, where $\mathcal M \in\R^{N\times N}$ is a matrix with elements $\{\mathcal M\}_{ii}=0$, $\{\mathcal M\}_{ij}={\varpi_{ij}}$, $\forall i,j\in[1;N],i\neq j$, $0\leq{\varpi_{ij}}\leq\emph{span}(\mathbb{{Y}}_{ji})$, is defined by $ \mathbb X =\prod_{i=1}^N \mathbb X_i$,
		$ \mathbb U=\prod_{i=1}^N \mathbb U_i$, $\mathcal{U}=\prod_{i=1}^N\mathcal{U}_i$, $ \mathbb Y=\prod_{i=1}^N \mathbb Y_{ii}$, and maps
		\begin{align*}\notag
		f(x,u)&\!\Let\!\{\intcc{x_{d1};\ldots;x_{dN}}\,|\, \!x_{di}\in \!f_i(x_i,u_i,w_i)~ \forall i\!\in\![1;N]\},\\
		h(x)&\!\Let \!\intcc{h_{11}(x_1);\ldots;h_{NN}(x_N)},
		\end{align*}
		where $u=\intcc{u_{1};\ldots;u_{N}}$, $x=\intcc{x_{1};\ldots;x_{N}}$, and subject to the constraint:
		$$w_{ij}=\vartheta_{\varpi_{ij}}(y_{ji}),~[\mathbb{{Y}}_{ji}]_{\varpi_{ij}}\subseteq \mathbb{W}_{ij}, \forall i,j\in[1;N],i\neq j.$$
\end{definition}
In the above definition, whenever ${\varpi_{ij}}\neq 0$, the sets $\mathbb{{Y}}_{ji}$, $\forall i,j\in[1;N],~i\neq j$, are assumed to be finite unions of boxes.

An example of an interconnection of three control subsystems $\Sigma_1$, $\Sigma_2$,
	and $\Sigma_3$ is illustrated in Figure \ref{system1}.
\begin{figure}[ht]
	\begin{tikzpicture}[>=latex']
	\tikzstyle{block} = [draw, 
	thick,
	rectangle, 
	minimum height=1cm, 
	minimum width=1.5cm]
	
	\node at (-3.5,-0.75) {$\mathcal{I}_{0_3}(\Sigma_1,\Sigma_2,\Sigma_3)$};
	
	\draw[black,dashed] (-1.7,-3.8) rectangle (1.7,.7);
	
	\node[black,block] (S1) at (0,0) {$\Sigma_1$};
	\node[black,block] (S2) at (0,-1.5) {$\Sigma_2$};
	\node[black,block] (S3) at (0,-3.0) {$\Sigma_3$};
	
	\draw[black,->] ($(S1.east)+(0,0.25)$) -- node[very near end,above] {$y_{11}$} ($(S1.east)+(1.5,.25)$);
	\draw[black,<-] ($(S1.west)+(0,0.25)$) -- node[very near end,above] {$u_{1}$} ($(S1.west)+(-1.5,.25)$);
	
	\draw[black,->] ($(S2.east)+(0,-.0)$) -- node[very near end,below] {$y_{22}$} ($(S2.east)+(1.5,-.0)$);
	\draw[black,<-] ($(S2.west)+(0,-.0)$) -- node[very near end,below] {$u_{2}$} ($(S2.west)+(-1.5,-.0)$);
	
	\draw[black,->] ($(S3.east)+(0,-.25)$) -- node[very near end,below] {$y_{33}$} ($(S3.east)+(1.5,-.25)$);
	\draw[black,<-] ($(S3.west)+(0,-.25)$) -- node[very near end,below] {$u_{3}$} ($(S3.west)+(-1.5,-.25)$);
	
	\draw[black,->] 
	($(S1.east)+(0,-.43)$) -- node[very near end,above] {$y_{12}$} 
	($(S1.east)+(.5,-.43)$) --
	($(S1.east)+(.5,-.5)$) --
	($(S2.west)+(-.5,.5)$) --
	($(S2.west)+(-.5,.43)$) -- node[very near start,below] {$w_{21}$}
	($(S2.west)+(0,.43)$) ;
	
	\draw[black,->] 
	($(S2.east)+(0,.43)$) -- node[very near end,below] {$y_{21}$} 
	($(S2.east)+(.5,.43)$) --
	($(S2.east)+(.5,.5)$) --
	($(S1.west)+(-.5,-.5)$) --
	($(S1.west)+(-.5,-.43)$) -- node[very near start,above] {$w_{12}$}
	($(S1.west)+(0,-.43)$) ;
	
	\draw[black,->] 
	($(S2.east)+(0,-.43)$) -- node[very near end,above] {$y_{23}$} 
	($(S2.east)+(.5,-.43)$) --
	($(S2.east)+(.5,-.5)$) --
	($(S3.west)+(-.5,.5)$) --
	($(S3.west)+(-.5,.43)$) -- node[very near start,below] {$w_{32}$}
	($(S3.west)+(0,.43)$) ;
	\draw[black,->] 
	($(S3.east)+(0,.43)$) -- node[very near end,below] {$y_{32}$} 
	($(S3.east)+(.5,.43)$) --
	($(S3.east)+(.5,.5)$) --
	($(S2.west)+(-.5,-.5)$) --
	($(S2.west)+(-.5,-.43)$) -- node[very near start,above] {$w_{23}$}
	($(S2.west)+(0,-.43)$) ;
	
	\end{tikzpicture}
	\vspace{-0.1cm}
	\caption{Interconnection of three control subsystems $\Sigma_1$, $\Sigma_2$, and $\Sigma_2$ with $h_{13}=h_{31}=0$.}
	\label{system1}
	\vspace{-0.05cm}
\end{figure}
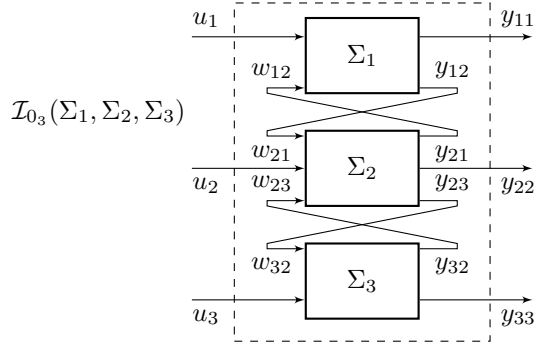

The following technical lemmas are used to prove some of the results in the next subsections. 
\begin{lemma}\label{lem1}
	For any $a,b \in \R_{>0}$, the following holds
	\begin{align}\label{fac1}
	a+b\leq\max\{(\mathcal{I}_d+\lambda)(a),(\mathcal{I}_d+\lambda^{-1})(b)\},
	\end{align}
	for any $\lambda \in \mathcal{K}_{\infty}$. 
\end{lemma}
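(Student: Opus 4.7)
The plan is to do a direct two-case split based on the relative sizes of $b$ and $\lambda(a)$. The key observation is that since $\lambda\in\mathcal{K}_\infty$ is continuous, strictly increasing, unbounded, and vanishes at $0$, its inverse $\lambda^{-1}$ exists and is also in $\mathcal{K}_\infty$; in particular $\lambda^{-1}$ is strictly increasing, and the identity $\lambda^{-1}(\lambda(a))=a$ holds for every $a\in\R_{\geq 0}$. This is enough machinery to bound one of the two summands $a,b$ by something in terms of the other.

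First, I would rewrite the right-hand side explicitly as $\max\{a+\lambda(a),\,b+\lambda^{-1}(b)\}$, so the claim becomes $a+b\leq \max\{a+\lambda(a),\,b+\lambda^{-1}(b)\}$.

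Then I would split:
\begin{itemize}
\item If $b\leq \lambda(a)$, then trivially $a+b\leq a+\lambda(a)=(\mathcal{I}_d+\lambda)(a)$, hence $a+b$ is bounded by the first term in the max.
\item Otherwise $b>\lambda(a)$. Applying the strictly increasing function $\lambda^{-1}$ to both sides yields $\lambda^{-1}(b)>\lambda^{-1}(\lambda(a))=a$, so $a+b< \lambda^{-1}(b)+b=(\mathcal{I}_d+\lambda^{-1})(b)$, and $a+b$ is bounded by the second term in the max.
\end{itemize}
In either case the desired inequality \eqref{fac1} follows.

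There is no real obstacle here; the only thing to be careful about is to invoke $\lambda^{-1}\in\mathcal{K}_\infty$ (rather than merely the existence of an inverse on the range of $\lambda$), which is automatic since $\lambda\in\mathcal{K}_\infty$ is a bijection from $\R_{\geq 0}$ to $\R_{\geq 0}$. The result is essentially a Young-type inequality tailored to comparison functions, and the case split around $\lambda(a)$ versus $b$ is the cleanest way to deliver it.
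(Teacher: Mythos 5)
Your proof is correct and follows essentially the same route as the paper: the paper sets $c=\lambda^{-1}(b)$ and splits on $a\leq c$ versus $a>c$, which is exactly your case split $b\geq\lambda(a)$ versus $b<\lambda(a)$ after applying the strictly increasing map $\lambda$. No gaps; the argument matches.
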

\begin{proof}{Proof.}
	Define $c=\lambda^{-1}(b)$. Now, one has 
	\begin{align}\notag
	a+b=\!\left\{
	\begin{array}{lr}
	\!a+\lambda (c)\leq c+\lambda (c) =(\II+\lambda^{-1})(b)&~\text{if} ~a\leq c,\\
	\!a+\lambda (c)< a+\lambda (a)=(\II+\lambda)(a) &~\text{if} ~a> c, 
	\end{array}
	\right.
	\end{align}
	which implies \eqref{fac1}. 
\end{proof}
\vspace{-0.5cm}
The next lemma is borrowed from \cite{Kellett2014}.
\begin{lemma}\label{lem2}
	Consider $\alpha  \in \mathcal{K}$ and $\chi \in \mathcal{K}_{\infty}$, where $(\chi-\mathcal{I}_d)\in \mathcal{K}_{\infty}$. Then for any $a,b \in \R_{\geq0}$ 
	\begin{align*}
	\alpha(a+b)\leq\alpha\circ\chi(a)+\alpha\circ\chi\circ(\chi-\mathcal{I}_d)^{-1}(b).
	\end{align*}
\end{lemma}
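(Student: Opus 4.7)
The plan is to split into two cases based on the relative sizes of $a$ and a specific function of $b$, then exploit monotonicity of $\alpha$ to conclude. The key auxiliary quantity is $c \Let (\chi-\mathcal{I}_d)^{-1}(b)$, which is well-defined because the hypothesis $(\chi-\mathcal{I}_d)\in\mathcal{K}_\infty$ ensures $\chi-\mathcal{I}_d$ is a bijection on $\R_{\geq 0}$. By construction, $c$ satisfies the identity $\chi(c) = c + b$.

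First I would handle the case $a \leq c$. Since $\alpha$ is monotone, $\alpha(a+b) \leq \alpha(c+b) = \alpha(\chi(c)) = \alpha\circ\chi\circ(\chi-\mathcal{I}_d)^{-1}(b)$, and since $\alpha\circ\chi(a) \geq 0$, the desired inequality follows by simply adding this nonnegative quantity on the right-hand side. Next I would handle the case $a > c$. Here I use that $\chi-\mathcal{I}_d \in \mathcal{K}_\infty$ is strictly increasing, so $b = \chi(c) - c < \chi(a) - a$, which rearranges to $a + b < \chi(a)$. Monotonicity of $\alpha$ then gives $\alpha(a+b) \leq \alpha\circ\chi(a)$, and the inequality follows again by adding the nonnegative term $\alpha\circ\chi\circ(\chi-\mathcal{I}_d)^{-1}(b)$ on the right.

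Combining the two cases yields the claim. I do not expect any real obstacle here: the proof is essentially a two-case argument that mirrors the structure of Lemma \ref{lem1}, the only subtlety being to notice that the hypothesis $(\chi - \mathcal{I}_d) \in \mathcal{K}_\infty$ is exactly what is needed both to invert $\chi - \mathcal{I}_d$ and to ensure strict monotonicity in the inequality $\chi(c) - c < \chi(a) - a$ when $a > c$. Edge cases $a = 0$ or $b = 0$ are trivially covered, since then one of the two terms on the right-hand side vanishes and the remaining term already dominates $\alpha(a+b)$ by monotonicity.
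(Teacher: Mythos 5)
Your proof is correct. The paper itself states this lemma without proof, deferring to \cite{Kellett2014}; your two-case argument built on $c=(\chi-\mathcal{I}_d)^{-1}(b)$ (so that $\chi(c)=c+b$, and either $a+b\leq\chi(c)$ or $a+b<\chi(a)$) is exactly the standard proof from that reference and mirrors the paper's own proof of Lemma \ref{lem1}, so there is nothing to add.
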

Next subsection provides one of the main results of the paper on the compositional construction of abstractions for networks of systems. 
\subsection{Compositional Construction of Abstractions}
In this subsection, we assume that we are given $N$ original control subsystems $\Sigma=(\mathbb X_i,\mathbb U_i,\mathbb W_i,\mathcal{U}_i,\mathcal{W}_i,f_i,\mathbb Y_i,h_i)$ together with their corresponding abstractions
$\hat{\Sigma}_i\!=(\hat{\mathbb{X}}_i,\hat{\mathbb{U}}_i,\hat{\mathbb{W}}_i,\hat{\mathcal{U}}_i,\hat{\mathcal{W}}_i,\hat{f}_i,\hat{\mathbb{Y}}_i,\hat{h}_i)$ and alternating simulation functions $V_i$ from
$\hat\Sigma_i$ to $\Sigma_i$. Moreover, for functions $\sigma_i$, $\alpha_i$, and $\rho_{iint}$ associated with $V_i$, $\forall~ i\in [1;N]$, appeared in Definition \ref{def:SFD1}, we define
\begin{align}\label{gammad}
\gamma_{ii}\Let\sigma_{i}, ~\gamma_{ij}\Let(\mathcal{I}_d+\lambda)\circ\rho_{iint}\circ\chi\circ\alpha_{j}^{-1},
\end{align}
$\forall j \in [1;N],~j\neq i$, with arbitrarily chosen $\lambda,\chi\in\mathcal{K}_{\infty}$ with $(\chi-\mathcal{I}_d)\in \mathcal{K}_{\infty}$. Additionally, Let $\hat{\mathcal M} \in\R^{N\times N}$ be a matrix with elements $\{\hat{\mathcal M}\}_{ii}=0$, $\{\hat{\mathcal M}\}_{ij}={\hat{\varpi}_{ij}}$, $\forall i,j\in[1;N],i\neq j$, $0\leq{\hat{\varpi}_{ij}}\leq\emph{span}(\mathbb{{\hat{Y}}}_{ji})$.

The next theorem provides a compositional approach on the construction of abstractions of networks of control subsystems and that of the corresponding alternating simulation functions. 
\begin{theorem}\label{thm:3}
	Consider the interconnected control system
	$\Sigma=\mathcal{I}_{0_N}(\Sigma_1,\ldots,\Sigma_N)$ induced by
	$N\in\N_{\ge1}$
	control subsystems~$\Sigma_i$. Assume that each $\Sigma_i$ and its abstraction $\hat{\Sigma}_i$ admit an alternating simulation function $V_i$.
	Let the following holds:
	\begin{align}\label{SGC}
	\gamma_{i_1i_2}\circ\gamma_{i_2i_3}\circ\cdots\circ\gamma_{i_{r-1}i_r}\circ\gamma_{i_ri_1}<\mathcal{I}_d,
	\end{align}
	$\forall(i_1,\ldots,i_r)\in\{1,\ldots,N\}^r\backslash \left\{\{1\}^r,\ldots,\{N\}^r \right\}$, where $r\in \{2,\ldots,N\}$.
	Then, there exist $\delta_i \in \mathcal{K}_{\infty}$ such that 
	\begin{align}\notag
	\tilde{V}&(x,\hat{x})\Let\max\limits_{i}\{ \delta^{-1}_{i}\circ V_i(x_{i},\hat{x}_{i}) \} 
	\end{align}
	is an alternating simulation function from $\hat \Sigma={\mathcal{I}}_{\hat{\mathcal M}}(\hat{\Sigma}_1,\ldots,\hat{\Sigma}_N)$ to $\Sigma$. 
\end{theorem}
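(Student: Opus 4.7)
The strategy is classical Jiang--Wang--Teel small--gain in max--form, adapted to handle the quantization map that appears in the abstract interconnection and the constants $\varepsilon_{i}$ allowed by Definition \ref{def:SFD1}. The plan has three pieces: (i) construct the scalings $\delta_{i}$ from the cyclic small--gain hypothesis \eqref{SGC}; (ii) verify the output condition \eqref{e:SFC11a} for $\tilde V$; and (iii) verify the decay condition \eqref{e:SFC22b}, which is the real work.

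\textbf{Step 1 (scalings from the small--gain condition).} I would invoke the standard max--form small--gain result: if the $\mathcal{K}_\infty$ gain matrix $(\gamma_{ij})$ satisfies the cyclic condition \eqref{SGC} on every cycle of length $\ge 2$, then there exist $\delta_{1},\ldots,\delta_{N}\in\mathcal{K}_\infty$ and some $\tilde\sigma\in\mathcal{K}_\infty$ with $\tilde\sigma<\mathcal{I}_d$ such that
\begin{equation*}
\delta_{i}^{-1}\circ\gamma_{ij}\circ\delta_{j}\;\le\;\tilde\sigma,\qquad \forall i,j\in[1;N].
\end{equation*}
This is exactly the discrete--time counterpart of the construction of an ``$\Omega$--path'' used to build max--form ISS--Lyapunov functions for networks. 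I will cite/reuse this result as a black box.

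\textbf{Step 2 (output condition).} From \eqref{e:SFC11} for each subsystem, $\|h_{ii}(x_i)-\hat h_{ii}(\hat x_i)\|\le\|h_i(x_i)-\hat h_i(\hat x_i)\|\le\alpha_i^{-1}(V_i(x_i,\hat x_i))\le\alpha_i^{-1}\!\circ\delta_i(\tilde V(x,\hat x))$. Since we use the infinity norm on the product output space, taking the maximum over $i$ yields $\tilde\alpha(\|h(x)-\hat h(\hat x)\|)\le\tilde V(x,\hat x)$ with, e.g., $\tilde\alpha^{-1}(r)=\max_i\alpha_i^{-1}\!\circ\delta_i(r)$.

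\textbf{Step 3 (decay condition).} Fix $x,\hat x,\hat u$. For each $i$, Definition \ref{def:SFD1} produces $u_{i}\in\mathbb U_i$ such that for every $x_{di}\in f_i(x_i,u_i,w_i)$ there exists $\hat x_{di}\in\hat f_i(\hat x_i,\hat u_i,\hat w_i)$ with
\begin{equation*}
V_i(x_{di},\hat x_{di})\le\max\{\sigma_i(V_i(x_i,\hat x_i)),\,\rho_{iint}(\|w_i-\hat w_i\|),\,\rho_{iext}(\|\hat u_i\|),\,\varepsilon_i\}.
\end{equation*}
By Definition \ref{def:5}, in the concrete interconnection $w_{ij}=y_{ji}=h_{ji}(x_j)$, while in $\hat\Sigma$ we have $\hat w_{ij}=\vartheta_{\hat\varpi_{ij}}(\hat h_{ji}(\hat x_j))$, so by the approximation property of $\vartheta$,
\begin{equation*}
\|w_{ij}-\hat w_{ij}\|\;\le\;\|h_{ji}(x_j)-\hat h_{ji}(\hat x_j)\|+\hat\varpi_{ij}\;\le\;\alpha_j^{-1}(V_j(x_j,\hat x_j))+\hat\varpi_{ij}.
\end{equation*}
Now I apply Lemma \ref{lem2} (with the same $\chi$ used in \eqref{gammad}) to split the sum inside $\rho_{iint}$, and then Lemma \ref{lem1} (with the same $\lambda$) to turn the resulting sum into a max:
\begin{equation*}
\rho_{iint}(\|w_{ij}-\hat w_{ij}\|)\le\max\{(\mathcal{I}_d+\lambda)\!\circ\!\rho_{iint}\!\circ\!\chi\!\circ\!\alpha_j^{-1}(V_j(x_j,\hat x_j)),\,(\mathcal{I}_d+\lambda^{-1})\!\circ\!\rho_{iint}\!\circ\!\chi\!\circ\!(\chi-\mathcal{I}_d)^{-1}(\hat\varpi_{ij})\}.
\end{equation*}
The first term is exactly $\gamma_{ij}(V_j(x_j,\hat x_j))$ by definition \eqref{gammad}; the second term depends only on the (fixed) quantization parameters and contributes to a constant $\psi_{i}$. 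Using $\|w_i-\hat w_i\|=\max_{j\neq i}\|w_{ij}-\hat w_{ij}\|$ and $\gamma_{ii}=\sigma_i$, the decay bound collapses to
\begin{equation*}
V_i(x_{di},\hat x_{di})\le\max\!\bigl\{\max_{j}\gamma_{ij}(V_j(x_j,\hat x_j)),\,\rho_{iext}(\|\hat u_i\|),\,\psi_i\bigr\}.
\end{equation*}
Replacing $V_j(x_j,\hat x_j)\le\delta_j(\tilde V(x,\hat x))$, applying $\delta_i^{-1}$, and using Step~1:
\begin{equation*}
\delta_i^{-1}\!\circ V_i(x_{di},\hat x_{di})\le\max\!\bigl\{\tilde\sigma(\tilde V(x,\hat x)),\,\delta_i^{-1}\!\circ\rho_{iext}(\|\hat u_i\|),\,\delta_i^{-1}(\psi_i)\bigr\}.
\end{equation*}
Finally, taking $\max_i$ gives \eqref{e:SFC22b} for $\tilde V$ with $\tilde\rho_{ext}(r)=\max_i\delta_i^{-1}\!\circ\rho_{iext}(r)$ and $\tilde\varepsilon=\max_i\delta_i^{-1}(\psi_i)$.

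\textbf{Main obstacle.} The bookkeeping of quantifiers in Step~3 is the subtle part: the existential $u_i$ and $\hat x_{di}$ from each subsystem's alternating simulation function have to line up with the interconnection constraint $w_{ij}=\vartheta_{\hat\varpi_{ij}}(\hat h_{ji}(\hat x_j))$, and one must check that the choice of $\hat x_{di}$ for subsystem $i$ does not depend on the choice of $\hat x_{dj}$ for $j\neq i$ (it doesn't, since Definition \ref{def:SFD1} quantifies over \emph{all} admissible $w_i,\hat w_i$ before choosing $\hat x_{di}$). The other delicate point is Step~1: producing the scalings $\delta_i$ from cyclic small--gain condition \eqref{SGC}; I would either appeal to the Rueffer/Dashkovskiy--Rueffer--Wirth construction of an $\Omega$--path, or construct $\delta_i$ explicitly as finite compositions of the $\gamma_{ij}$ along paths rooted at a fixed node, which is possible because \eqref{SGC} forces every cycle product to lie below $\mathcal{I}_d$.
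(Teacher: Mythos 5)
Your proposal is correct and follows essentially the same route as the paper's proof: the $\Omega$-path/scaling functions $\delta_i$ from the cyclic condition (the paper cites Theorem 5.2 of Dashkovskiy--R\"uffer--Wirth for this), the same output-condition estimate, and the same chain in the decay step --- triangle inequality with the quantizer error $\hat\varpi_{ij}$, Lemma \ref{lem2} to split the sum inside $\rho_{iint}$, Lemma \ref{lem1} to convert the sum to a max (with $\varepsilon_i$ absorbed into the constant), and the $\delta_j\circ\delta_j^{-1}$ insertion to close the loop. Your remarks on the quantifier bookkeeping and on constructing the $\delta_i$ match the paper's treatment, so there is nothing to correct.
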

\begin{proof}{Proof.}
	Note that by using Theorem 5.2 in \cite{090746483}, condition \eqref{SGC} implies that $\exists~\delta_i \in \mathcal{K}_{\infty}$ $\forall i\in [1;N]$, satisfying 
	\begin{align}\label{gam}
	&\max\limits_{j\in [1;N]}\{\delta^{-1}_i\circ\gamma_{ij}\circ\delta_j\}<\mathcal{I}_d.
	\end{align} 
	Now, we show that $\eqref{e:SFC11a}$ holds for some $\mathcal{K}_{\infty}$ function $\tilde{\alpha}$. Consider any $x_i\in \mathbb X_i$, $\hat{x}_i \in \mathbb{\hat{X}}_i$, $\forall i\in[1;N]$. Then, one gets
	\begin{align*}\notag
	\Vert h(x)-\hat{h}(\hat{x})\Vert &=\max\limits_{i}\{\Vert h_{ii}(x_i)-\hat{h}_{ii}(\hat{x}_i)\Vert\}\\&\leq\max\limits_{i}\{\Vert h_{i}(x_i)-\hat{h}_{i}(\hat{x}_i)\Vert\}\\&\leq\max\limits_{i}\{\alpha^{-1}_i\circ V_i(x_{i},\hat{x}_{i})\}\\&\leq \hat{\alpha}\circ\max\limits_{i}\{ \delta^{-1}_{i}\circ V_i(x_{i},\hat{x}_{i})\},
	\end{align*}
	where $\hat{\alpha}(s)=\max\limits_{i}\{\alpha^{-1}_i\circ \delta_{i}(s)\}$ for all $s \in \mathbb{R}_{\geq0}$. By defining $\tilde{\alpha}= \hat{\alpha}^{-1}$, one obtains
	\begin{align*}\notag
	\tilde{\alpha}(\Vert h(x)-\hat{h}(\hat{x})\Vert )\leq \tilde{V}(x,\hat{x}),
	\end{align*}
	satisfying $\eqref{e:SFC11a}$.
	Now, we show that $\eqref{e:SFC22b}$ holds.
	Consider any 
	$x=\intcc{x_1;\ldots;x_N}\in\mathbb X$,
	$\hat x=\intcc{\hat x_1;\ldots;\hat x_N}\in\hat{\mathbb{X}}$, and any
	$\hat u=\intcc{\hat u_{1};\ldots;\hat u_{N}}\in\hat{\mathbb{U}}$. For any $i\in[1;N]$, there exists $u_i\in\mathbb U_i$, consequently, a vector $u=\intcc{u_{1};\ldots;u_{N}}\in\mathbb U$ such that  for any $x_d \in f(x,u)$ there exists  $\hat{x}_{d} \in \hat{f}(\hat{x},\hat{u})$ satisfying \eqref{e:SFC22} for each pair of subsystems $\Sigma_i$ and $\hat\Sigma_i$ with the internal inputs given by $\hat w_{ij}=\vartheta_{\hat{\varpi}_{ij}}(\hat{y}_{ji})$ $\forall i,j\in [1;N], j\neq i$. One gets the chain of inequalities in \eqref{comp}
	\begin{figure*}[h]
		\rule{\textwidth}{0.4pt}
		\begin{small}
			\begin{align}\notag
			\tilde{V}(x_{d},\hat{x}_{d})=&\max\limits_{i}\{\delta^{-1}_{i}\circ V_i(x_{d_i},\hat{x}_{d_i})\}\\\notag
			\leq& \max\limits_{i}\Big\{\delta^{-1}_{i}\big(\max\{\sigma_i\circ V_i(x_i,\hat{x}_i),\rho_{iint}(\Vert w_i- \hat{w}_i\Vert ),\rho_{iext}(\Vert \hat{u}_i\Vert ),\varepsilon_i\}\big)\Big\}\\\notag
			=& \max\limits_{i}\Big\{\delta^{-1}_{i}\big(\max\{\sigma_i\circ V_i(x_i,\hat{x}_i),\rho_{iint}(\max\limits_{j,j\neq i}\{\Vert w_{ij}- \hat{w}_{ij}\Vert \}),\rho_{iext}(\Vert \hat{u}_i\Vert ),\varepsilon_i\}\big)\Big\}\\\notag
			=& \max\limits_{i}\Big\{\delta^{-1}_{i}\big(\max\{\sigma_i\circ V_i(x_i,\hat{x}_i),\rho_{iint}(\max\limits_{j,j\neq i}\{\Vert y_{ji}-\vartheta_{\hat{\varpi}_{ij}(\hat{y}_{ji})}\Vert \}),\rho_{iext}(\Vert \hat{u}_i\Vert ),\varepsilon_i\}\big)\Big\}\\\notag
			=& \max\limits_{i}\Big\{\delta^{-1}_{i}\big(\max\{\sigma_i\circ V_i(x_i,\hat{x}_i),\rho_{iint}(\max\limits_{j,j\neq i}\{\Vert y_{ji}- \hat{y}_{ji}+\hat{y}_{ji}-\vartheta_{\hat{\varpi}_{ij}(\hat{y}_{ji})}\Vert \}),\rho_{iext}(\Vert \hat{u}_i\Vert ),\varepsilon_i\}\big)\Big\}\\\notag
			\leq&\max\limits_{i}\Big\{\delta^{-1}_{i}\big( \max\{\sigma_i\circ V_i(x_i,\hat{x}_i),\rho_{iint}(\max\limits_{j,j\neq i}\{\Vert h_{j}(x_j)\!\!-\!\!\hat{h}_{j}(\hat{x_j})\Vert\!\!+\!\hat{\varpi}_{ij} \}),\rho_{iext}(\Vert \hat{u}_i\Vert ),\varepsilon_i\}\big)\Big\}\\\notag
			\leq&\max\limits_{i}\Big\{\delta^{-1}_{i}\big( \max\{\sigma_i\circ V_i(x_i,\hat{x}_i),\rho_{iint}(\max\limits_{j,j\neq i}\{\alpha^{-1}_{j}\circ V_j(x_{j},\hat{x}_{j})\!+\!\hat{\varpi}_{ij} \}),\rho_{iext}(\Vert \hat{u}_i\Vert ),\varepsilon_i\}\big)\Big\}\\\notag
			\leq&\max\limits_{i}\Big\{\delta^{-1}_{i}\big( \max\{\sigma_i\circ V_i(x_i,\hat{x}_i),\rho_{iint}\circ\chi(\max\limits_{j,j\neq i}\{\alpha^{-1}_{j}\!\circ\! V_j(x_{j},\hat{x}_{j}) \})\!+\!\rho_{iint}\circ\!\chi\!\circ(\chi\!-\!\mathcal{I}_d)^{\!-\!1}(\max\limits_{j,j\neq i}\{\hat{\varpi}_{ij}\}),\rho_{iext}(\Vert \hat{u}_i\Vert ),\varepsilon_i\}\big)\!\Big\}\\\label{comp}
			\leq&\max\limits_{i}\Big\{\delta^{-1}_{i}\big( \max\{\sigma_i\circ V_i(x_i,\hat{x}_i),(\mathcal{I}_d+\lambda)\circ\rho_{iint}\circ\chi(\max\limits_{j,j\neq i}\{\alpha^{-1}_{j}\circ V_j(x_{j},\hat{x}_{j}) \}),\\\notag&\rho_{iext}(\Vert \hat{u}_i\Vert ),(\mathcal{I}_d+\lambda^{-1})\circ(\rho_{iint}\circ\chi\circ(\chi-\mathcal{I}_d)^{-1}(\max\limits_{j,j\neq i}\{\hat{\varpi}_{ij}\})+\varepsilon_i)\}\big)\Big\}\\\notag
			\leq&\max\limits_{i,j}\Big\{\delta^{-1}_{i}\big( \max\{\gamma_{ij}\circ V_j(x_j,\hat{x}_j),\rho_{iext}(\Vert \hat{u}_i\Vert ),\phi_i\}\big)\Big\}\\\notag
			=&\max\limits_{i,j}\Big\{\delta^{-1}_{i}\big( \max\{\gamma_{ij}\circ\delta_{j}\circ\delta^{-1}_{j}\circ V_j(x_j,\hat{x}_j),\rho_{iext}(\Vert \hat{u}_i\Vert ),\phi_i\}\big)\Big\}\\\notag
			\leq&\max\limits_{i,j,l}\Big\{\delta^{-1}_{i}\big( \max\{\gamma_{ij}\circ\delta_{j}\circ\delta^{-1}_{l}\circ V_l(x_l,\hat{x}_l),\rho_{iext}(\Vert \hat{u}_i\Vert ),\phi_i\}\big)\Big\}\\\notag
			=&\max\limits_{i,j}\Big\{\delta^{-1}_{i}\big( \max\{\gamma_{ij}\circ \delta_{j}\circ\tilde{V}(x,\hat{x}),\rho_{iext}(\Vert \hat{u}_i\Vert ),\phi_i\}\big)\Big\}\\\notag
			=&\max\Big\{\tilde{\sigma}\circ \tilde{V}(x,\hat{x}),\max\limits_{i}\big\{\delta^{-1}_{i}\circ\rho_{iext}(\Vert \hat{u}_i\Vert ),\delta^{-1}_{i}\big(\phi_i\big)\big\}\Big\},\notag
			\end{align}
		\end{small}
		\rule{\textwidth}{0.4pt}
	\end{figure*}
	for some arbitrarily chosen $\lambda,\chi\in\mathcal{K}_{\infty}$ with $(\chi-\mathcal{I}_d)\in \mathcal{K}_{\infty}$. Observe that, in inequalities \eqref{comp},  we used Lemma \ref{lem2} to go from line 7 to 8, and Lemma \ref{lem1} to go from line 8 to 9.  
	Define $\tilde{\sigma}$, $\tilde{\varepsilon}$, and $\tilde{\rho}_{ext}$ as follows:
	\begin{align*}\notag
	\tilde{\sigma}&\Let\max\limits_{i,j}\{\delta^{-1}_i\circ\gamma_{ij}\circ\delta_j\},\\
	\tilde{\varepsilon}&\Let \max\limits_{i}\{\delta^{-1}_{i}(\phi_i)\},\\
	\tilde{\rho}_{ext}(s)&\Let\left\{
	\begin{array}{lr}
	\max\limits_{i}\{\delta^{-1}_{i}\circ\rho_{iext}(s_i)\}\\
	s.t.~~~s=\Vert[s_1,\cdots,s_n]\Vert$, $s_i\geq0,
	\end{array}\right.
	\end{align*}
	where, $\forall i\in [1;N]$,
	\begin{align}\label{phi}
		\phi_i\!\!=\!\!(\mathcal{I}_d\!+\!\!\lambda^{\!-\!1}\!)\!\circ\!(\rho_{iint}\!\circ\!\chi\!\circ\!(\chi\!\!-\!\!\mathcal{I}_d)^{\!-\!1}\!(\max\limits_{j,j\neq i}\{{\hat{\varpi}_{ij}}\})\!+\!\varepsilon_i)\!\!\!
		\end{align} 
	Observe that it follows from \eqref{gam} that $\tilde{\sigma}<\mathcal{I}_d$. Then, one has
	\begin{align}\label{maxin}
	\tilde{V}(x_{d},\hat{x}_{d})\leq \max\{\tilde{\sigma}\circ \tilde{V}(x,\hat{x}),\tilde{\rho}_{ext}(\Vert \hat{u}\Vert ),\tilde{\varepsilon}\},
	\end{align}
	which satisfies $\eqref{e:SFC22b}$, and implies that $\tilde{V}$ is indeed an alternating simulation function from $\hat{\Sigma}$ to $\Sigma$. 
\end{proof}
Note that, similar technique was proposed in \cite{7496809} using nonlinear
	small-gain type condition to construct compositionally an approximate \emph{infinite} abstraction
	of an interconnected \emph{continuous-time} control system. Since in \cite{7496809} a simulation function between each subsystem and its abstraction is formulated in a \emph{dissipative-form} \cite{8000331}, an extra operator (the operator $D$ in \cite[equation (12)]{7496809}) is required to formulate the small-gain condition and to construct what is called an $\Omega$-path \cite[Definition 5.1]{090746483}, which is exactly the $\mathcal{K}_{\infty}$ functions $\delta_i,i\in N,$ that satisfy condition \eqref{gam} in our work. However, the definition of the simulation function in our work is formulated in a \emph{max-form} \cite{8000331} which results in not only simpler formulation of the small-gain condition but also the $\Omega$-path construction can be achieved without the need of the extra operator, see \cite[Section 8.4]{090746483}.
\begin{remark}
		Note that if, $\forall i\in [1;N]$, $\rho_{iint}$ are linear functions, i.e., $\rho_{iint}(a+b)=\rho_{iint}(a)+\rho_{iint}(b)$, $\forall a,b\in \R_{\ge0}$, we omit the $\mathcal{K}_{\infty}$ function $\chi$ in \eqref{gammad} and \eqref{phi}; hence, $\gamma_{ij}$ and $\phi_i$ in the previous theorem reduce to $\gamma_{ij}=(\mathcal{I}_d+\lambda)\circ\rho_{iint}\circ\alpha_{j}^{-1}$ and $\phi_i=(\mathcal{I}_d+\lambda^{-1})\circ(\rho_{iint}\circ(\max\limits_{j,j\neq i}\{\hat{\varpi}_{ij}\})+\varepsilon_i)$, $\forall i,j\in [1;N]$, $j\neq i$, respectively. Moreover, if $\hat{\varpi}_{ij}=0$,  we omit the $\mathcal{K}_{\infty}$ function $\lambda$ in \eqref{gammad} and \eqref{phi}. Therefore, $\gamma_{ij}$ and $\phi_i$ reduce to $\gamma_{ij}=\rho_{iint}\circ\alpha_{j}^{-1}$ and $\phi_i=\varepsilon_i$, $\forall i,j\in [1;N]$, $j\neq i$, respectively.
\end{remark}
\begin{remark}\label{sgcv} 
	We emphasize that the proposed small-gain type condition in \eqref{SGC} is much more general than the ones proposed in \cite{7403879,Majumdar}. To be more specific, consider the following system:
	\begin{align*}
	\Sigma:\left\{
	\begin{array}{rl}
	\mathbf{x}_1(k+1)&=a_1\mathbf{x}_1(k)+b_1\sqrt{\vert\mathbf{x}_2(k)\vert},\\
	\mathbf{x}_2(k+1)&=a_2\mathbf{x}_2(k)+b_2g(x_1(k)),
	\end{array}\right.
	\end{align*}
	where $0<a_1<1$, $0<a_2<1$, and function $g$ satisfies the following quadratic Lipschitz assumption: there exists an $L\in\R_{>0}$ such that: $\vert g(x)-g(x')\vert\leq L\vert x-x'\vert^2$ for all $x,x'\in \mathbb \R$. One can easily verify that functions $V_1(x_1,\hat x_1)=\vert x_1-\hat x_1\vert$ and $V_2(x_2,\hat x_2)=\vert x_2-\hat x_2\vert$ are alternating simulation functions from $\mathbf{x}_1$-subsystem to itself and $\mathbf{x}_2$-subsystem to itself, respectively.  
	Here, one can not come up with gain functions satisfying Assumption (A2) in \cite{7403879} globally (assumptions 1) and 2) in Theorem 3 in \cite{Majumdar} are continuous-time counterpart of Assumption (A2) in \cite{7403879}). In particular, those assumptions require existence of $\mathcal{K}_\infty$ functions being upper bounded by linear ones and lower bounded by quadratic ones which is impossible. On the other hand, the proposed small-gain condition \eqref{SGC} is still applicable here showing that $\tilde{V}(x,\hat{x})\Let\max\{ \delta^{-1}_{1}\circ V_1(x_{1},\hat{x}_{1}),\delta^{-1}_{2}\circ V_2(x_{2},\hat{x}_{2}) \}$ is an alternating simulation function from $\Sigma$ to itself, for some appropriate $\delta_{1},\delta_{2} \in\KK$ satisfying \eqref{gam} which is guaranteed to exist if $|b_1|\sqrt{|b_2|L}<1$ and $|b_2|(b_1L)^2<1$. 
\end{remark}

\begin{remark}
		Here, we provide a general guideline on the computation of $\mathcal{K}_{\infty}$ functions $\delta_i, i\in[1;N]$ as the following: $(i)$ In a general case of having $N\ge 1$ subsystems, functions $\delta_i, i\in[1;N]$, can be constructed numerically using the algorithm proposed in \cite{Eaves} and the technique provided in \cite[Proposition 8.8]{090746483}, see \cite[Chapter 4]{Rufferp}; $(ii)$ Simple construction techniques are provided in \cite{JIANG} and \cite[Section 9]{090746483} for the case of two and three subsystems, respectively; $(iii)$ the $\mathcal{K}_{\infty}$ functions $\delta_i, i\in[1;N]$, can be always chosen as identity functions provided that $\gamma_{ij}<\mathcal{I}_d$, $\forall~ i,j\in [1;N]$, for functions $\gamma_{ij}$ appeared in \eqref{gammad}.
\end{remark} 
\section{Construction of Symbolic Models}\label{1:IV}
In this section, we consider $\Sigma=(\mathbb X,\mathbb U,\mathbb W,\mathcal{U},\mathcal{W},f,\mathbb Y,h)$ as an infinite, deterministic control system and assume its output map $h$ satisfies the following general Lipschitz assumption: there exists an $\ell\in\KK$ such that: $\Vert h(x)-h(x')\Vert\leq \ell(\Vert x-x'\Vert)$ for all $x,x'\in \mathbb X$. Note that this assumption on $h$ is not restrictive at all provided that one is interested to work on a compact subset of $\mathbb X$. In addition, the existence of an alternating simulation function between $\Sigma$ and its finite abstraction is established under the assumption that $\Sigma$ is so-called incrementally input-to-state stabilizable as defined next.
\begin{definition}\label{ass:1}
	System $\Sigma\!=\!(\mathbb X,\mathbb U,\mathbb W\!,\mathcal{U},\mathcal{W}\!,f,\mathbb Y\!,h)$
	is called incrementally input-to-state stabilizable if there exist functions $\mathcal{H}:\mathbb X \to \mathbb U$ and  $ \mathcal{G}:\mathbb X \times \mathbb X \to \mathbb{R}_{\geq0} $  such that $\forall x,x'\in \mathbb X$, $\forall u,u'\in \mathbb U$, $\forall w,w' \in \mathbb W$, the inequalities:
	\begin{align}\label{eq:ISTFC1}
	\underline{\alpha} (\Vert x-x'\Vert ) \leq \mathcal{G}(x,x')\leq \overline{\alpha}(\Vert x-x'\Vert ),
	\end{align}
	and
	\vspace{-0.2cm}
	\begin{align}\notag
	\mathcal{G}&(f(x,\mathcal{H}(x)\!+\!u,w),f(x',\mathcal{H}(x')\!+\!u',w'))\!-\!\mathcal{G}(x,x')\\
	&\leq\!-\kappa(\mathcal{G}(x,x'))\!+\!\gamma_{int}(\Vert w\!-\! w'\Vert)\!+\!\gamma_{ext}(\Vert u\!-\! u'\Vert )\label{eq:ISTFC2}
	\end{align}
	hold for some $\underline{\alpha}, \overline{\alpha}, \kappa,\gamma_{int},\gamma_{ext} \in \mathcal{K}_{\infty}$.
\end{definition}

Remark that in Definition \ref{ass:1}, we implicitly assume that $\mathcal{H}(x)+u\in\mathbb{U}$ for any $x\in\mathbb{X}$ and any $u\in\mathbb{U}$. Note that any classically stabilizable linear control system is also incrementally stabilizable as in Definition \ref{ass:1}. For nonlinear control systems, the notion of incremental stabilizability as in Definition \ref{ass:1} is stronger than conventional stabilizability.
We refer the interested readers to \cite{ruffer} for detailed information on incremental input-to-state stability of discrete-time control systems.

Now, we construct a finite abstraction $\hat\Sigma$ of an incrementally input-to-state stabilizable control system $\Sigma$ as the following.
\begin{definition}\label{def:sym}
	Let $\Sigma=(\mathbb X,\mathbb U,\mathbb W,\mathcal{U},\mathcal{W},f,\mathbb Y,h)$ be incrementally input-to-state stabilizable as in Definition \ref{ass:1}, where $\mathbb X,\mathbb U,\mathbb W$ are assumed to be finite unions of boxes. One can construct a finite system 
	\begin{align}\label{eq:14}
	\hat{\Sigma}&=(\mathbb{\hat{X}},\mathbb{\hat{U}},\mathbb{\hat{W}},\hat{\mathcal{U}},\hat{\mathcal{W}},\hat{f},\hat{\mathbb{Y}}, \hat{h}),
	\end{align}
	\vspace{-0.2cm}
	where:
	\begin{itemize}
		\item $\mathbb{\hat{X}}=[\mathbb X]_\eta$, where $0<\eta\leq\emph{span}(\mathbb X)$ is the state set quantization parameter; 
		\item $\mathbb{\hat{U}}=[\mathbb U]_{\mu}$, where $0<\mu\leq\emph{span}(\mathbb U)$ is the external input set quantization parameter;
		\item $\mathbb{\hat{W}}=[\mathbb{W}]_{\hat{\varpi}}$, where $0\leq\hat{\varpi}\leq\emph{span}(\mathbb W)$ is the internal input set quantization parameter;
		\item $\hat{x}_{d}\in\hat{f}(\hat{x},\hat{u},\hat{w})$ iff $\Vert\hat{x}_{d}-f(\hat{x},\mathcal{H}(\hat{x})+\hat{u},\hat{w})\Vert \leq\eta$;
		\item $\hat{\mathbb{Y}}=\{h(\hat{x})\,\,|\,\,\hat{x} \in \mathbb{\hat{X}}\} $;
		\item $\hat{h}=h$.
	\end{itemize}
\end{definition}
Next, we establish the relation between $\Sigma$ and $\hat{\Sigma}$, introduced above, via the notion of alternating simulation function in Definition \ref{def:SFD1}. In particulate, we show that $\hat{\Sigma}$ is a complete finite abstraction of $\Sigma$ by proving that function $\mathcal{G}$ in Definition \ref{ass:1} is an alternating simulation function from $\hat{\Sigma}$ to ${\Sigma}$ and from $\Sigma$ to $\hat{\Sigma}$.

\begin{theorem}\label{thm:2}
	Let $\Sigma$ be an incrementally input-to-state stabilizable control system as in Definition \ref{ass:1} and $\hat{\Sigma}$ be a finite system as constructed in Definition \ref{def:sym}. Assume that there exists a function $\hat{\gamma}\in\mathcal{K}_{\infty}$ such that for any $x,x',x'' \in \mathbb{X}$ one has
	\begin{align}\label{eq:TI}
	\mathcal{G}(x,x')\leq \mathcal{G}(x,x'')+\hat{\gamma}(\Vert x'-x''\Vert)
	\end{align}
	for $\mathcal G$ as in Definition \ref{ass:1}.
	Then $\mathcal{G}$ is actually an alternating simulation function from $\hat{\Sigma}$ to $\Sigma$ and from $\Sigma$ to $\hat{\Sigma}$.
\end{theorem}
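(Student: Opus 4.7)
The plan is to verify both conditions of Definition \ref{def:SFD1} directly, making use of the incremental stability inequalities \eqref{eq:ISTFC1}--\eqref{eq:ISTFC2} and the triangle-like property \eqref{eq:TI}, and then handle the two directions ($\hat\Sigma$ to $\Sigma$ and $\Sigma$ to $\hat\Sigma$) by essentially symmetric arguments.

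First I would establish the output-gap inequality \eqref{e:SFC11}. From the Lipschitz assumption $\Vert h(x)-h(\hat x)\Vert \leq \ell(\Vert x-\hat x\Vert)$ together with \eqref{eq:ISTFC1} giving $\Vert x-\hat x\Vert \leq \underline{\alpha}^{-1}(\mathcal{G}(x,\hat x))$, one immediately obtains $\Vert h(x)-\hat h(\hat x)\Vert \leq \ell\circ\underline{\alpha}^{-1}(\mathcal{G}(x,\hat x))$, so setting $\alpha \Let \underline{\alpha}\circ\ell^{-1}$ yields \eqref{e:SFC11}. This works identically in both directions since $\hat h=h$ and $\hat{\mathbb{X}}\subseteq\mathbb{X}$.

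Next I would show the decay condition \eqref{e:SFC22} for the direction $\hat\Sigma\to\Sigma$. Given $x,\hat x,\hat u$, my choice is the concrete input $u\Let\mathcal{H}(x)+\hat u$ so that the perturbation about $\mathcal{H}(x)$ exactly matches that of the abstraction. Because $\Sigma$ is deterministic, the successor $x_d=f(x,\mathcal{H}(x)+\hat u,w)$ is unique; for the abstract side I select $\hat x_d=\vartheta_{\eta}(f(\hat x,\mathcal{H}(\hat x)+\hat u,\hat w))$, which belongs to $\hat f(\hat x,\hat u,\hat w)$ by Definition \ref{def:sym}. Applying \eqref{eq:ISTFC2} with $u=u'=\hat u$ (so that $\gamma_{ext}(\Vert u-u'\Vert)=0$) and then \eqref{eq:TI} with $x''=f(\hat x,\mathcal{H}(\hat x)+\hat u,\hat w)$, I would obtain
\begin{align*}
\mathcal{G}(x_d,\hat x_d)\leq \mathcal{G}(x,\hat x)-\kappa(\mathcal{G}(x,\hat x))+\gamma_{int}(\Vert w-\hat w\Vert)+\hat\gamma(\eta).
\end{align*}
To convert this dissipative-form inequality to the required \emph{max-form} I would invoke Lemma \ref{lem1} twice with a suitable $\lambda\in\mathcal{K}_\infty$, grouping first the disturbance and quantization offset $\gamma_{int}(\Vert w-\hat w\Vert)+\hat\gamma(\eta)$, and then splitting it from the term $(\mathcal{I}_d-\kappa)(\mathcal{G}(x,\hat x))$. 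This produces $\sigma(s)\Let(\mathcal{I}_d+\lambda)(s-\kappa(s))$, $\rho_{int}\Let(\mathcal{I}_d+\lambda^{-1})\circ(\mathcal{I}_d+\tilde\lambda)\circ\gamma_{int}$, $\rho_{ext}\equiv 0$, and $\varepsilon$ collecting terms proportional to $\hat\gamma(\eta)$. Choosing $\lambda$ small enough ensures $\sigma<\mathcal{I}_d$, so that \eqref{e:SFC22} is satisfied.

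Finally, for the direction $\Sigma\to\hat\Sigma$, the quantifier structure is reversed: given $x,\hat x,u$ I must pick $\hat u\in\hat{\mathbb{U}}$ so that the decay holds against every $\hat x_d\in\hat f(\hat x,\hat u,\hat w)$. I would take $\hat u\Let\vartheta_{\mu}(u-\mathcal{H}(x))$, so that \eqref{eq:ISTFC2} with $u'=\hat u$, $v=u-\mathcal{H}(x)$ yields $\gamma_{ext}(\Vert u-\hat u-\mathcal{H}(x)\Vert)\leq\gamma_{ext}(\mu)$, and combining with \eqref{eq:TI} as before gives the analogous bound with an additional $\gamma_{ext}(\mu)$ term absorbed into $\varepsilon$. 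The main obstacle in both directions is the clean conversion from the dissipative form supplied by \eqref{eq:ISTFC2} into the max-form of Definition \ref{def:SFD1} while guaranteeing $\sigma<\mathcal{I}_d$; this is precisely what Lemma \ref{lem1} is designed to accomplish, provided $\lambda$ is chosen compatible with $\kappa$. Everything else is a careful bookkeeping of the constants $\eta$ and $\mu$ which appear only through the $\varepsilon$ of the relation, as promised by the construction in Definition \ref{def:sym}.
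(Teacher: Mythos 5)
Your argument tracks the paper's proof almost step for step: the same $\alpha=\underline{\alpha}\circ\ell^{-1}$ for the output condition (the paper writes it as $(\ell\circ\underline{\alpha}^{-1})^{-1}$), the same choice $u=\mathcal{H}(x)+\hat u$, the same combination of \eqref{eq:TI} with $x''=f(\hat x,\mathcal{H}(\hat x)+\hat u,\hat w)$ and \eqref{eq:ISTFC2} to arrive at the dissipative inequality $\mathcal{G}(x_d,\hat x_d)\leq\mathcal{G}(x,\hat x)-\kappa(\mathcal{G}(x,\hat x))+\gamma_{int}(\Vert w-\hat w\Vert)+\hat\gamma(\eta)$, and the same treatment of the reverse direction via $\hat u=\vartheta_\mu(u-\mathcal{H}(x))$ contributing an extra $\gamma_{ext}(\mu)$ inside $\varepsilon$. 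The only place you diverge is the dissipative-to-max conversion, where the paper invokes the argument of Theorem~1 in \cite{arxiv} (producing $\tilde\kappa=\mathcal{I}_d-(\mathcal{I}_d-\psi)\circ\hat\kappa$ and gains involving $\hat\kappa^{-1}\circ\psi^{-1}\circ\chi$) while you apply Lemma \ref{lem1} twice.

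That conversion step has a genuine gap as written. Definition \ref{ass:1} only requires $\kappa\in\mathcal{K}_\infty$; it does not guarantee $\mathcal{I}_d-\kappa\in\mathcal{K}_\infty$, nor even that $s-\kappa(s)\geq 0$. If $\kappa(s)>s$ for some $s$, your candidate $\sigma(s)=(\mathcal{I}_d+\lambda)(s-\kappa(s))$ is not defined (a $\mathcal{K}_\infty$ function $\lambda$ acts on $\R_{\geq 0}$), and Lemma \ref{lem1} itself is stated only for $a,b\in\R_{>0}$, so you cannot feed it $a=(\mathcal{I}_d-\kappa)(\mathcal{G}(x,\hat x))$. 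This is exactly why the paper's proof first replaces $\kappa$ by an auxiliary $\hat\kappa\leq\kappa$ with $\mathcal{I}_d-\hat\kappa\in\mathcal{K}_\infty$. The fix is routine: weaken \eqref{eq:ISTFC2} to $-\hat\kappa(\mathcal{G}(x,\hat x))$ for such a $\hat\kappa$, after which $\sigma=(\mathcal{I}_d+\lambda)\circ(\mathcal{I}_d-\hat\kappa)\in\mathcal{K}_\infty$ and any $\lambda<\hat\kappa\circ(\mathcal{I}_d-\hat\kappa)^{-1}$ (e.g.\ half of it) gives $\sigma<\mathcal{I}_d$ globally --- your phrase ``choosing $\lambda$ small enough'' should be backed by this explicit inequality, since a single fixed $\lambda$ must work for all $s>0$. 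With that repair your two applications of Lemma \ref{lem1} yield a valid, and in fact somewhat simpler, max-form than the paper's (no $\chi$ or $\psi$), at the price of different gain functions; everything else in your proposal matches the paper's proof.
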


\begin{proof}{Proof.} Given the Lipschitz assumption on $h$ and since $\Sigma$ is incrementally input-to-state stabilizable, from \eqref{eq:ISTFC1}, $\forall x\in \mathbb{X}$ and $ \forall \hat{x} \in \mathbb{\hat{X}}
	$, we have 
	\begin{align}\notag
	\Vert h(x)-\hat{h}(\hat{x})\Vert\leq \ell(\Vert x-\hat{x}\Vert)\leq\hat{\alpha}(\mathcal{G}(x,\hat{x})),
	\end{align}	
	where $\hat{\alpha}=\ell\circ\underline{\alpha}^{-1}$. By defining $\alpha=\hat{\alpha}^{-1}$, one obtains
	\begin{align}\notag
	\alpha (\Vert h(x)-\hat{h}(\hat{x})\Vert )\leq \mathcal{G}(x,\hat{x}),
	\end{align}	
	satisfying \eqref{e:SFC11}.
	Now from \eqref{eq:TI}, $\forall x\in \mathbb{X}, \forall \hat{x} \in \mathbb{\hat{X}}, \forall \hat{u} \in \mathbb{\hat{U}},\forall w \in \mathbb{W},\forall \hat{w} \in \mathbb{\hat{W}}$, we have 
	\begin{align*}
	\mathcal{G}(f(x,&\mathcal{H}(x)+\hat{u},w),\hat{x}_{d})\\
	&\leq\mathcal{G}(f(x,\mathcal{H}(x)+\hat{u},w),f(\hat{x},\mathcal{H}(\hat{x})+\hat{u},\hat{w}))\\		&~~~+\hat{\gamma}(\Vert\hat{x}_{d}-f(\hat{x},\mathcal{H}(\hat{x})+\hat{u},\hat{w})\Vert)
	\end{align*}
	for any $\hat{x}_{d}\in\hat{f}(\hat{x},\hat{u},\hat{w})$.
	Now, from Definition \ref{def:sym}, the above inequality reduces to
	\begin{align*}
	\mathcal{G}&(f(x,\mathcal{H}(x)+\hat{u},w),\hat{x}_{d})\\
	&\leq\mathcal{G}(f(x,\mathcal{H}(x)+\hat{u},w),f(\hat{x},\mathcal{H}(\hat{x})+\hat{u},\hat{w}))+\hat{\gamma}(\eta).
	\end{align*}
	Note that by \eqref{eq:ISTFC2}, we get 
	\begin{align*}
	\mathcal{G}&(f(x,\mathcal{H}(x)+\hat{u},w),f(\hat{x},\mathcal{H}(\hat{x})+\hat{u},\hat{w}))-\mathcal{G}(x,\hat{x})\\
	&\leq-\kappa(\mathcal{G}(x,\hat{x}))+\gamma_{int}(\Vert w- \hat{w}\Vert ).
	\end{align*}	
	Hence, $\forall x\in \mathbb{X}, \forall \hat{x} \in \mathbb{\hat{X}}, \forall \hat{u} \in \mathbb{\hat{U}},$ and $\forall w \in \mathbb{W},\forall \hat{w} \in \mathbb{\hat{W}}
	$, one obtains
	\begin{align*}
	\mathcal{G}&(f(x,\mathcal{H}(x)+\hat{u},w),\hat{x}_{d})-\mathcal{G}(x,\hat{x})\\
	&\leq-\kappa(\mathcal{G}(x,\hat{x}))+\gamma_{int}(\Vert w- \hat{w}\Vert )+\hat{\gamma}(\eta)
	\end{align*}
	for any $\hat{x}_{d}\in\hat{f}(\hat{x},\hat{u},\hat{w})$.
	Using the previous inequality and by following a similar argument as the one in the proof of Theorem 1 in \cite{arxiv}, one obtains 
	\begin{align*}
	\mathcal{G}&(f(x,\mathcal{H}(x)+\hat{u},w),\hat{x}_{d})\\
	&\leq\max\{\tilde{\kappa}(\mathcal{G}(x,\hat{x})),\tilde{\gamma}_{int}(\Vert w- \hat{w}\Vert ),\tilde{\gamma}(\eta)\},
	\end{align*}
	where $\tilde{\kappa}=\mathcal{I}_d-(\mathcal{I}_d-\psi)\circ\hat{\kappa}$, $\tilde{\gamma}_{int}=(\mathcal{I}_d+\lambda)\circ\hat{\kappa}^{-1}\circ\psi^{-1}\circ\chi\circ\gamma_{int}$, $\tilde{\gamma}=(\mathcal{I}_d+\lambda^{-1})\circ\hat{\kappa}^{-1}\circ\psi^{-1}\circ\chi\circ(\chi-\mathcal{I}_d)^{-1}\circ\hat{\gamma}$,
	where $\lambda,\chi,\psi,\hat{\kappa}$ are some arbitrarily chosen $\mathcal{K}_{\infty}$ functions with $\mathcal{I}_d-\psi\in\mathcal{K}_{\infty},~\chi-\mathcal{I}_d\in\mathcal{K}_{\infty},~ \mathcal{I}_d-\hat{\kappa}\in\mathcal{K}_{\infty}$ and $\hat{\kappa}\leq\kappa $. Hence, inequality \eqref{e:SFC22} is satisfied 
	with $u=\mathcal{H}(x)+\hat{u}$, $\sigma=\tilde{\kappa}$, $\rho_{int}=\tilde{\gamma}_{int}$, $\rho_{ext}(s)= 0~ \forall s \in \R_{\ge0}$, $\varepsilon=\tilde{\gamma}(\eta)$, and, hence, $\mathcal{G}$ is an alternating simulation function from $\hat \Sigma$ to $\Sigma$.
	Similarly, we can also show that $\mathcal{G}$ is an alternating simulation function from $\Sigma$ to $\hat\Sigma$. In particular, by the definition of $\mathbb{\hat{U}}$, for any $u=\mathcal{H}(x)+\tilde{u}\in\mathbb{{U}}$ there always exists $\hat{u}\in\mathbb{\hat{U}}$ such that $\gamma_{ext}(\Vert \tilde{u}-\hat{u}\Vert)\leq \gamma_{ext}(\mu)$ which results in $\varepsilon=(\mathcal{I}_d+\lambda^{-1})\circ\hat{\kappa}^{-1}\circ\psi^{-1}\circ\chi\circ(\chi-\mathcal{I}_d)^{-1}\left(\gamma_{ext}(\mu)+\hat{\gamma}(\eta)\right)$. Other terms in the alternating simulation function $\mathcal{G}$ are the same as the first part of the proof.    		
\end{proof}	
\begin{remark}
	Observe that if $\gamma_{int}$ and $\hat{\gamma}$ are linear functions in the previous theorem, $\tilde{\gamma}_{int}$ and $\tilde{\gamma}$ reduce to $\tilde{\gamma}_{int}=(\mathcal{I}_d+\lambda)\circ\hat{\kappa}^{-1}\circ\psi^{-1}\circ\gamma_{int}$ and $\tilde{\gamma}=(\mathcal{I}_d+\lambda^{-1})\circ\hat{\kappa}^{-1}\circ\psi^{-1}\circ\hat{\gamma}$, respectively.
\end{remark}
\vspace{-0.75cm}
\begin{remark}		
		Although the choices of $\mathcal{K}_{\infty}$ functions $\lambda,\chi,\psi$, and $\hat{\kappa}$ in the previous theorem mainly depend on the dynamic of the given control systems, we provide a general guideline on choosing those functions as follows: $(i)$ In order to reduce the undesirable effect of the inverse of $\hat{\kappa}$ and $\psi$ in satisfying the small-gain condition in \eqref{SGC}, or in computing the value of the overall approximation error in \eqref{er}, one should choose those functions to behave very close to the identity function; $(ii)$ Regarding $\lambda$ and $\chi$, one should choose those functions such that the small gain condition in \eqref{SGC} is possibly satisfied, and then compute the overall approximation error in \eqref{er}. If the computed error is acceptable by the user, no further action is required; otherwise one should start slightly modifying those functions until a smaller error is achieved while ensuring that the small gain condition is not violated. For example, one can scale the $\mathcal{K}_{\infty}$ function $\lambda$ by a linear function $\beta(s)=cs \in \mathcal{K}_{\infty}$ , $\forall s\in\R_{\ge0}, c>1$, and then, using $\beta\circ\lambda$ instead of $\lambda$, start increasing the value of $c$ until a smaller error is obtained. Same procedure can be simultaneously applied to the $\mathcal{K}_{\infty}$ function $\chi$. It may be the case that the desired error is not achievable with the chosen $\lambda$ and $\chi$, then one should start over and choose different $\lambda$ and $\chi$ and go through similar procedure again.	
\end{remark}
Remark that condition \eqref{eq:TI} is not restrictive at all provided that one is interested to work on a compact subset of $\mathbb{X}$. We refer the interested readers to the explanation provided after equation (V.2) in \cite{zamani2014symbolic} on how to compute such function $\hat\gamma$. 

Now we provide similar results as in the first part of this section but tailored
to linear control systems which are computationally much more efficient. 
\subsection{Discrete-Time Linear Control Systems}\label{1:B}
The class of discrete-time linear control systems, considered in this subsection, is given by
\begin{align}\label{e:lin:sys}\Sigma:\left\{
\begin{array}[\relax]{rl}
x(k+1)&=Ax(k)+Bu(k)+Dw(k),\\
y(k)&=Cx(k),
\end{array}\right.
\end{align}
where $A\in\R^{n\times n}$, $B\in\R^{n\times m}$, $D\in\R^{n\times p}$, $C\in\R^{q\times n}$.
We use the tuple $\Sigma=(A,B,C,D)$
to refer to the class of control systems of the form~\eqref{e:lin:sys}. Remark that the incremental input-to-state stabilizability assumption in Definition \ref{def:sym} boils down in the linear case to the following assumption.
\begin{assumption}\label{assumption1}
	Let $\Sigma=(A,B,C,D)$. Assume that there exist matrices $Z\succ0$ and $K$ of appropriate dimensions such that the matrix inequality
	\begin{align}\label{e:lmi}
	(1+2\theta)(A+BK)^TZ(A+BK)\preceq \kappa_c Z	
	\end{align} holds for some constants $0<\kappa_c<1$, and $\theta\in\R_{>0}$.	
\end{assumption}
Note that condition \eqref{e:lmi} is nothing more than pair $(A,B)$ being stabilizable \cite{primer}. 
\begin{remark}
	Given constants $\kappa_c$ and $\theta$, one can easily see that inequality \eqref{e:lmi} is not jointly convex on decision variables $Z$ and $K$ and, hence, not amenable to existing semidefinite tools for linear matrix inequalities (LMI). However, using Schur complement, one can easily transform inequality \eqref{e:lmi} to the following LMI over decision variables $Q$ and $M$:
	\begin{align*}
	\begin{bmatrix}-\kappa_c Q & QA^T+M^TB^T  \\ AQ+BM & -(1+2\theta)Q\end{bmatrix}\preceq0,~Q\succ0,
	\end{align*}
	where $Q=Z^{-1}$ and $M=KQ$.
\end{remark}
Now, Theorem \ref{thm:2} reduces to the following one for linear systems. 
\begin{theorem}\label{Thm_3a}
	Consider $\Sigma=(A,B,C,D)$ and the finite abstraction $ \hat{\Sigma}$ constructed as in Definition \ref{def:sym}. Suppose Assumption \ref{assumption1} holds. Then, function 
	\begin{align}\label{e:lin:sf}
	V(x,\hat x)=\sqrt{(x-\hat x)^TZ(x-\hat x)},
	\end{align}
	is an alternating simulation function from $\hat \Sigma$ to $\Sigma$ and from $\Sigma$ to $\hat\Sigma$.
\end{theorem}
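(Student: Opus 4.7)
The plan is to realize Theorem \ref{Thm_3a} as a direct specialization of Theorem \ref{thm:2}. Concretely, I will verify that with the feedback $\mathcal{H}(x):=Kx$ and the candidate function $\mathcal{G}(x,x'):=V(x,x')=\sqrt{(x-x')^T Z(x-x')}$, the linear system $\Sigma=(A,B,C,D)$ is incrementally input-to-state stabilizable in the sense of Definition \ref{ass:1} and additionally satisfies condition \eqref{eq:TI}. The Lipschitz hypothesis on $h(x)=Cx$ required by Theorem \ref{thm:2} is clearly met with a linear $\ell\in\mathcal{K}_\infty$ proportional to $\|C\|_2$ (up to a norm-equivalence constant between $\|\cdot\|_2$ and $\|\cdot\|$).

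The sandwich bound \eqref{eq:ISTFC1} is immediate from $\sqrt{\lambda_{\min}(Z)}\,\|x-x'\|_2 \le V(x,x') \le \sqrt{\lambda_{\max}(Z)}\,\|x-x'\|_2$, combined with the equivalence between the Euclidean and the infinity norm on $\R^n$; both $\underline{\alpha}$ and $\overline{\alpha}$ can then be taken linear. Condition \eqref{eq:TI} follows from $V$ being a weighted Euclidean norm: the triangle inequality yields $V(x,x')\le V(x,x'')+V(x'',x')$, and $V(x'',x')\le\sqrt{\lambda_{\max}(Z)}\,\|x''-x'\|_2$ gives a linear $\hat{\gamma}$.

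The main computation is the decrement \eqref{eq:ISTFC2}. Writing $u=Kx+\tilde{u}$, $u'=Kx'+\tilde{u}'$, the state-difference reads $f(x,u,w)-f(x',u',w')=(A+BK)(x-x')+B(\tilde{u}-\tilde{u}')+D(w-w')$. I will apply the Young-type splitting $\|a+b\|_Z^2\le(1+2\theta)\|a\|_Z^2+(1+\tfrac{1}{2\theta})\|b\|_Z^2$, using the same constant $\theta$ as in Assumption \ref{assumption1}. Condition \eqref{e:lmi} absorbs the first term into $\kappa_c V(x,x')^2$; the second is bounded via the triangle inequality and $\|Mv\|_Z\le\sqrt{\lambda_{\max}(Z)}\,\|M\|_2\,\|v\|_2$ for $M\in\{B,D\}$. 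Taking square roots and using $\sqrt{p+q}\le\sqrt{p}+\sqrt{q}$ for $p,q\ge 0$ gives $V(f,f')\le \sqrt{\kappa_c}\,V(x,x')+c_u\|\tilde{u}-\tilde{u}'\|+c_w\|w-w'\|$ for explicit constants $c_u,c_w$. Rearranging yields \eqref{eq:ISTFC2} with $\kappa(r)=(1-\sqrt{\kappa_c})r\in\mathcal{K}_\infty$ (well-defined since $\kappa_c<1$) and linear $\gamma_{int},\gamma_{ext}$.

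With all hypotheses of Theorem \ref{thm:2} in place, $V$ is an alternating simulation function from $\hat{\Sigma}$ to $\Sigma$; the reverse direction is handled identically, using the closing argument in the proof of Theorem \ref{thm:2} where the quantization of $u$ by a nearest $\hat{u}\in[\mathbb{U}]_\mu$ is absorbed into $\gamma_{ext}$. The only delicate step is matching the Young parameter to the exact constant $\theta$ appearing in \eqref{e:lmi}, so that the contraction factor $\kappa_c$ emerges cleanly; everything else reduces to elementary norm manipulations.
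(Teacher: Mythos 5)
Your reduction is correct, but it takes a genuinely different route from the paper's. The paper does not derive Theorem \ref{Thm_3a} from Theorem \ref{thm:2}; it redoes the estimate from scratch: it picks $u=K(x-\hat x)+\hat u$, sets $F=A\hat x+B\hat u+D\hat w-\hat x_d$, and runs an explicit chain of weighted-norm inequalities (culminating in \eqref{eq:lin}) in which all cross terms are handled with the single parameter $\theta$ from \eqref{e:lmi}, before converting the additive bound to max-form by the same argument used in Theorem \ref{thm:2}. Your route --- verify \eqref{eq:ISTFC1}, \eqref{eq:ISTFC2} and \eqref{eq:TI} for $\mathcal{G}=V$ and $\mathcal{H}(x)=Kx$, then invoke Theorem \ref{thm:2} --- is shorter and makes explicit the logical dependence that the paper only gestures at (``Theorem \ref{thm:2} reduces to the following one''). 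What the direct computation buys is sharper, fully explicit gains: the contraction factor $\sqrt{\kappa_c}$, $\rho_{int}(s)\propto\Vert\sqrt{Z}D\Vert_2\,s$, and $\varepsilon\propto\sqrt{n\lambda_{\max}(Z)}\,\eta$ with the precise $\theta$-dependent coefficients $(1+\theta+\theta^2)/\theta$ and $(2+\theta)/\theta$; these feed the quantitative error \eqref{er} used in the case studies, and your generic Young splitting together with $\Vert Dv\Vert_Z\le\sqrt{\lambda_{\max}(Z)}\,\Vert D\Vert_2\Vert v\Vert_2$ yields strictly looser constants (harmless for the existence statement, costly for the comparisons in Section \ref{1:V}). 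Two small points to watch: (i) Definition \ref{def:sym} builds $\hat f$ around the nominal successor $f(\hat x,\mathcal{H}(\hat x)+\hat u,\hat w)$, so the abstraction you certify (built with $\mathcal{H}(\hat x)=K\hat x$) is not the same finite system as the one the paper's computation implicitly certifies, whose nominal successor is $A\hat x+B\hat u+D\hat w$, i.e.\ $\mathcal{H}\equiv 0$ on the abstract side with the feedback $K(x-\hat x)$ carried entirely by the concrete input; both are legitimate instantiations since the theorem does not pin down $\mathcal{H}$, but you should state which one you mean. (ii) In the reverse direction the input-quantization term $\gamma_{ext}(\mu)$ is absorbed into the constant $\varepsilon$, not into $\rho_{ext}$, exactly as in the closing lines of the proof of Theorem \ref{thm:2}.
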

\begin{proof}{Proof.}
	First, we show that condition \eqref{e:SFC11} holds. Since $C = \hat C$, we have
	\begin{align*}
	\Vert Cx-\hat C\hat x\Vert\leq\sqrt{n\lambda_{\max}(C^TC)}\Vert x- \hat x\Vert,
	\end{align*}
	and similarity
	\begin{align*}
	\sqrt{\lambda_{\min}(Z)}\Vert x- \hat x\Vert\leq\sqrt{(x-\hat x)^TZ(x-\hat x)}.
	\end{align*}
	It can be readily verified that \eqref{e:SFC11} holds for $V$ defined in~\eqref{e:lin:sf} with $\alpha(s)=\sqrt{\frac{\lambda_{\min}(Z)}{n\lambda_{\max}(C^TC)}}s$ for any $s\in\R_{\geq0}$. 
	We continue to show that \eqref{e:SFC22} holds as well. Let $x$, $\hat x$, $\hat u$, and $\hat w$ be given, and choose $u$ as $	u\Let K(x-\hat x)+\hat u$. Let $x_d=Ax+Bu+Dw$, and $\hat{x}_d$ be defined as in Definition \ref{def:sym}. Define $F\Let  A\hat x +B\hat u+ D\hat w -\hat{x}_d$, and $\hat{\kappa}_c\Let 1-\sqrt{\kappa_c}$.
	\begin{figure*}[h]
		\rule{\textwidth}{0.4pt}
		\begin{small}
			\begin{align}\notag
			V(x_d,\hat{x}_{d})&=(Ax+Bu+Dw-(A\hat x+ B\hat u + D\hat w)+(A\hat x+ B\hat u + D\hat w)-\hat{x}_d)^TZ\\\notag
			&~~~~~(Ax+Bu+Dw-(A\hat x+ B\hat u + D\hat w)+(A\hat x+ B\hat u + D\hat w)-\hat{x}_d)^{\frac{1}{2}}\\\notag
			&=\big((x-\hat x)^T(A+BK)^TZ(A+BK)(x-\hat x)+(w-\hat w)^T D^TZ D(w-\hat w)+2 (w-\hat w)^T D^TZF\\\notag
			&~~~~~+2 (x-\hat x)^T(A+BK)^TZD(w-\hat w)+2(x-\hat x)^T(A+BK)^TZF+F^T Z F\big)^{\frac{1}{2}}\\\notag
			&\leq \big((x-\hat x)^T(A+BK)^TZ(A+BK)(x-\hat x)+(w-\hat w)^T D^TZ D(w-\hat w)+2 \Vert(w-\hat w)^T D^T\sqrt{Z}\Vert_2 \Vert \sqrt{Z}F\Vert_2\\\notag
			&~~~~~+2\Vert(x-\hat x)^T(A+BK)^T\sqrt{Z}\Vert_2 \Vert \sqrt{Z}F\Vert_2+2 \Vert(x-\hat x)^T(A+BK)^T\sqrt{Z}\Vert_2\Vert
			\sqrt{Z}D(w-\hat w)\Vert_2\\\notag&~~~~~+n\lambda_{\max}(Z)\eta^2\big)^{\frac{1}{2}} \\\notag
			&\leq \big((x-\hat x)^T(A+BK)^TZ(A+BK)(x-\hat x)+2\theta\Vert(x-\hat x)^T(A+BK)^T\sqrt{Z}\Vert^2_2+(w-\hat w)^T D^TZ D(w-\hat w)\\\notag
			&~~~~~+\frac{\Vert(w-\hat w)^T D^T\sqrt{Z}\Vert_2^2}{\theta}+2\frac{\Vert \sqrt{Z}F\Vert^2_2}{\theta}+ \theta\Vert(w-\hat w)^T D^T\sqrt{Z}\Vert^2_2+n\lambda_{\max}(Z)\eta^2\big)^{\frac{1}{2}}\\\label{eq:lin}
			&\leq \bigg((1+2\theta)(x-\hat x)^T(A+BK)^TZ(A+BK)(x-\hat x)+\frac{(1+\theta+\theta^2)(w-\hat w)^T D^TZ D(w-\hat w)}{\theta}\\\notag
			&~~~~~+\frac{n(2+\theta)\lambda_{\max}(Z)\eta^2}{\theta}\bigg)^{\frac{1}{2}}\\\notag
			&\leq\sqrt{\kappa_c}V(x,\hat{x})+\sqrt{\frac{1+\theta+\theta^2}{\theta}}\Vert \sqrt{Z}D\Vert_2 \Vert w-\hat w\Vert_2 +\sqrt{\frac{n(2+\theta)\lambda_{\max}(Z)}{\theta}}\eta\\\notag
			&\leq\sqrt{\kappa_c}V(x,\hat{x})+\sqrt{p\frac{1+\theta+\theta^2}{\theta}}\Vert \sqrt{Z}D\Vert_2 \Vert w-\hat w\Vert +\sqrt{\frac{n(2+\theta)\lambda_{\max}(Z)}{\theta}}\eta\\\notag
			&\leq(1-\hat{\kappa}_c)V(x,\hat{x})+\sqrt{p\frac{1+\theta+\theta^2}{\theta}}\Vert \sqrt{Z}D\Vert_2 \Vert w-\hat w\Vert +\sqrt{\frac{n(2+\theta)\lambda_{\max}(Z)}{\theta}}\eta.
			\end{align}
		\end{small}
		\rule{\textwidth}{0.4pt}
		\begin{small}
			\begin{align}\label{rlmi}
			V(x_d,\hat{x}_{d})
			\!\leq\! \max\Bigg\{\tilde{\kappa}\big((x-\hat x)^TZ(x-\hat x)\big)^{\frac{1}{2}},\frac{(1+\delta_c)}{\hat{\kappa}_c \psi_c}\sqrt{p\frac{(1\!+\!\theta\!+\!\theta^2)}{\theta}}\Vert \sqrt{Z}D\Vert_2 \Vert w-\hat w\Vert,\frac{(1+1/\delta_c)}{\hat{\kappa}_c \psi_c}\sqrt{\frac{n(2+\theta)\lambda_{\max}(Z)}{\theta}}\eta\Bigg\},	
			\end{align}
		\end{small}
		\rule{\textwidth}{0.4pt}
	\end{figure*}
	By following a similar argument as the one in the proof of Theorem 1 in \cite{arxiv}, one gets \eqref{rlmi}
	where $\tilde{\kappa}=(1-\hat{\kappa}_c(1-\psi_c))$, satisfying \eqref{e:SFC22} with $\sigma(s)=\tilde{\kappa}s$, $\rho_{ext}(s)= 0$,  $\rho_{int}(s)=\frac{(1+\delta_c)}{\hat{\kappa}_c \psi_c}\sqrt{p\frac{(1+\theta+\theta^2)}{\theta}}\Vert \sqrt{Z}D\Vert_2 s$, $\forall s\in\R_{\ge0}$, $\varepsilon=\frac{(1+1/\delta_c)}{\hat{\kappa}_c \psi_c}\sqrt{\frac{n(2+\theta)\lambda_{\max}(Z)}{\theta}}\eta$, where $\psi_c$ and $\delta_c$ can be chosen arbitrarily such that $0<\psi_c<1$ and $\delta_c>0$. Hence, the proposed $V$ in \eqref{e:lin:sf} is an alternating simulation function from $\hat \Sigma$ to $\Sigma$. The rest of the proof follows similar argument. In particular, by the definition of $\mathbb{\hat{U}}$, for any $u=K(x-\hat x)+\tilde{u}\in\mathbb{{U}}$ there always exists $\hat{u}\in\mathbb{\hat{U}}$ such that $\Vert B\Vert\Vert \tilde{u}-\hat{u}\Vert\leq \Vert B\Vert\mu$ which results in $\varepsilon=\frac{(1+1/\delta_c)}{\hat{\kappa}_c \psi_c}\sqrt{\frac{n(2+\theta)\lambda_{\max}(Z)}{\theta}}(\Vert B\Vert\mu+\eta)$. Other terms are the same as before.
\end{proof}
\section{Case Study}\label{1:V}
In this section we provide two case studies to illustrate our results and show their effectiveness in comparison with the existing compositional results in \cite{arxiv}. 
We first apply our results to the temperature regulation in a circular building by constructing compositionally a finite abstraction
of a network containing $n\geq 3$ rooms, each equipped with a heater. Then we
apply the proposed techniques to a fully connected network to show its
applicability to strongly connected networks as well. The construction of symbolic models and controllers are performed using tool SCOTS \cite{Rungger} on a PC with Intel i7@3.4GHz CPU and 16 GB of RAM.
\subsection{Room Temperature Control}\label{1}
The evolution of the temperature $\mathbf{T}$ of all rooms are described by the interconnected discrete-time model:
\begin{align*}
\Sigma:\left\{
\begin{array}{rl}
\mathbf{T}(k+1)&=A\mathbf{T}(k)+ \beta T_{E}+\mu T_{h}\nu(k),\\
\mathbf{y}(k)&=\mathbf{T}(k),
\end{array}\right.
\end{align*}
adapted from \cite{meyer}, where $A\in\R^{n\times n}$ is a matrix with elements $\{A\}_{ii}=(1-2\alpha-\beta-\mu\nu_{i}(k))$, $\{A\}_{i(i+1)}=\{A\}_{(i+1)i}=\{A\}_{1n}=\{A\}_{n1}=\alpha$, $\forall i\in [1;n-1]$, and all other elements are identically zero, $\mathbf{T}(k)=[\mathbf{T}_1(k);\ldots;\mathbf{T}_n(k)]$,  $\nu(k)=[\nu_1(k);\ldots;\nu_n(k)]$, $T_E=[T_{e1};\ldots;T_{en}]$, where $\nu_i(k)$, $\forall i\in[1;n]$, are taking values in $[0,0.6]$. The other parameters are as follow: $\forall i\in[1;n]$, $T_{ei}=-1\,^\circ C$ is the outside temperature, $T_h\!=\!50\,^\circ C$ is the heater temperature, and the conduction factors are given by $\alpha\!=\!0.45$, $\beta\!=\!0.045$, and $\mu\!=\!0.09$.\\
Now, by introducing $\Sigma_i$ described by
\begin{align*}
\Sigma_i:\left\{
\begin{array}{rl}
\mathbf{T}_i(k+1)&=a\mathbf{T}_i(k)+d\omega_i(k)+\beta T_{ei} +\mu T_h \nu_i(k),\\
\mathbf{y}_{i}(k)&=\mathbf{T}_i(k),%
\end{array}\right.
\end{align*}
one can readily verify that $\Sigma=\mathcal{I}_{0_n}(\Sigma_1,\ldots,\Sigma_n)$, where $a=1-2\alpha-\beta-\mu\nu_{i}(k) $, $d=[\alpha;\alpha]^T$, and $\omega_i(k)=[\mathbf{y}_{i-1}(k);\mathbf{y}_{i+1}(k)]$ (with $\mathbf{y}_{0}=\mathbf{y}_{n}$ and $\mathbf{y}_{n+1}=\mathbf{y}_{1}$).	
Note that for any $i\in[1;n]$, conditions \eqref{eq:ISTFC1} and \eqref{eq:ISTFC2} are satisfied with $\mathcal{G}_i(T_i,\hat{T}_i)=\Vert T_i-\hat T_i\Vert$, $\mathcal{H}_i\equiv0$, $\underline{\alpha}_{i}(s)=\overline{\alpha}_{i}(s)=s$, $\kappa_i(s)= (1-a)s$, $\gamma_{iint}(s)=\alpha s$, and $\gamma_{iext}\equiv0$. Furthermore, \eqref{eq:TI} is satisfied with $\hat{\gamma}=\mathcal{I}_d$. Consequently, $\mathcal{G}_i(T_i,\hat{T}_i)=\Vert T_i-\hat T_i\Vert$ is an alternating simulation function from $\hat\Sigma_i$,  constructed as in Definition \ref{def:sym}, to $\Sigma_i$.

Let, $\forall i\in[1;n]$, the $\mathcal{K}_{\infty}$ functions $\lambda_i$, $\psi_i$, and $\hat{\kappa}_i$ in the proof of Theorem \ref{thm:2} be as follows: $\lambda_i=\mathcal{I}_d$, $\psi_i(s)=0.99s$, $\hat{\kappa}_i=\kappa_i$. Since we have $\gamma_{ij}(s)<\mathcal{I}_d$, $\forall i,j\in[1;n],~ i\neq j$ and for any $n\geq 3$, the small-gain condition \eqref{SGC} is satisfied without any restriction on the number of rooms.
Using the results in Theorem \ref{thm:3} with $\delta^{-1}_{i}=\mathcal{I}_d,~ \forall i\in [1;n]$, one can verify that $V(T,\hat{T})=\max_{i}\{\Vert T_i-\hat T_i\Vert\}$ is an alternating simulation function from $\hat{\Sigma}=\mathcal{I}_{0_n}(\hat\Sigma_1,\ldots,\hat\Sigma_n)$ to $\Sigma$ satisfying conditions \eqref{e:SFC11a} and \eqref{e:SFC22b} with $\tilde{\sigma}(s)=\max\left\{(1-(1-a)10^{-2})s,\frac{2.02\alpha}{1-a} s\right\}$, $\tilde{\alpha}(s)=s$, $\tilde{\rho}_{ext}(s)=0$ $\forall s\in\R_{\ge0}$, $\tilde{\varepsilon}=\max_{i}\left\{\frac{2.02\eta_i}{1-a}\right\}$, $\forall i\in[1;N]$, where $\eta_i$ is the state set quantization parameter of abstraction $\hat\Sigma_i$. 

Remark that, to have a fair comparison with the compositional technique proposed in \cite{arxiv}, we have assumed that $\mathbb{\hat{Y}}_{ji}= \mathbb{\hat W}_{ij}$, i.e. $\varpi_{ij}=0$, $\forall i,j \in[1;n],~i\neq j$.  
For the fair comparison, we compute error $\hat\varepsilon$ in the $\hat\varepsilon$-approximate alternating simulation relation as in \eqref{er} based on the dissipativity approach in \cite{arxiv} and the small-gain approach here. This error represents the mismatch between the output behavior of the concrete interconnected system $\Sigma$ and that of its finite abstraction $\hat{\Sigma}$. We evaluate $\hat \varepsilon$ for different number of subsystems $n$ and different values of the state set quantization parameters $\eta_i$ for abstractions $\hat\Sigma_i~\forall i\in [1;n]$ as in Figure \ref{tc}.
As shown, the small-gain approach results in less mismatch errors than those obtained using the dissipativity based approach in \cite{arxiv}. 
The reason is that the error in \eqref{er} is computed based on the maximum of the errors between concrete subsystems and their finite abstractions instead of being a linear combination of them which is the case in \cite{arxiv}. Hence, by increasing the number of subsystems, our error does not change here whereas the error computed by the dissipativity based approach in \cite{arxiv} will increase as shown in Figure \ref{tc}.

\begin{figure}[thp]
	\begin{center}
		\includegraphics[height=7.0cm]{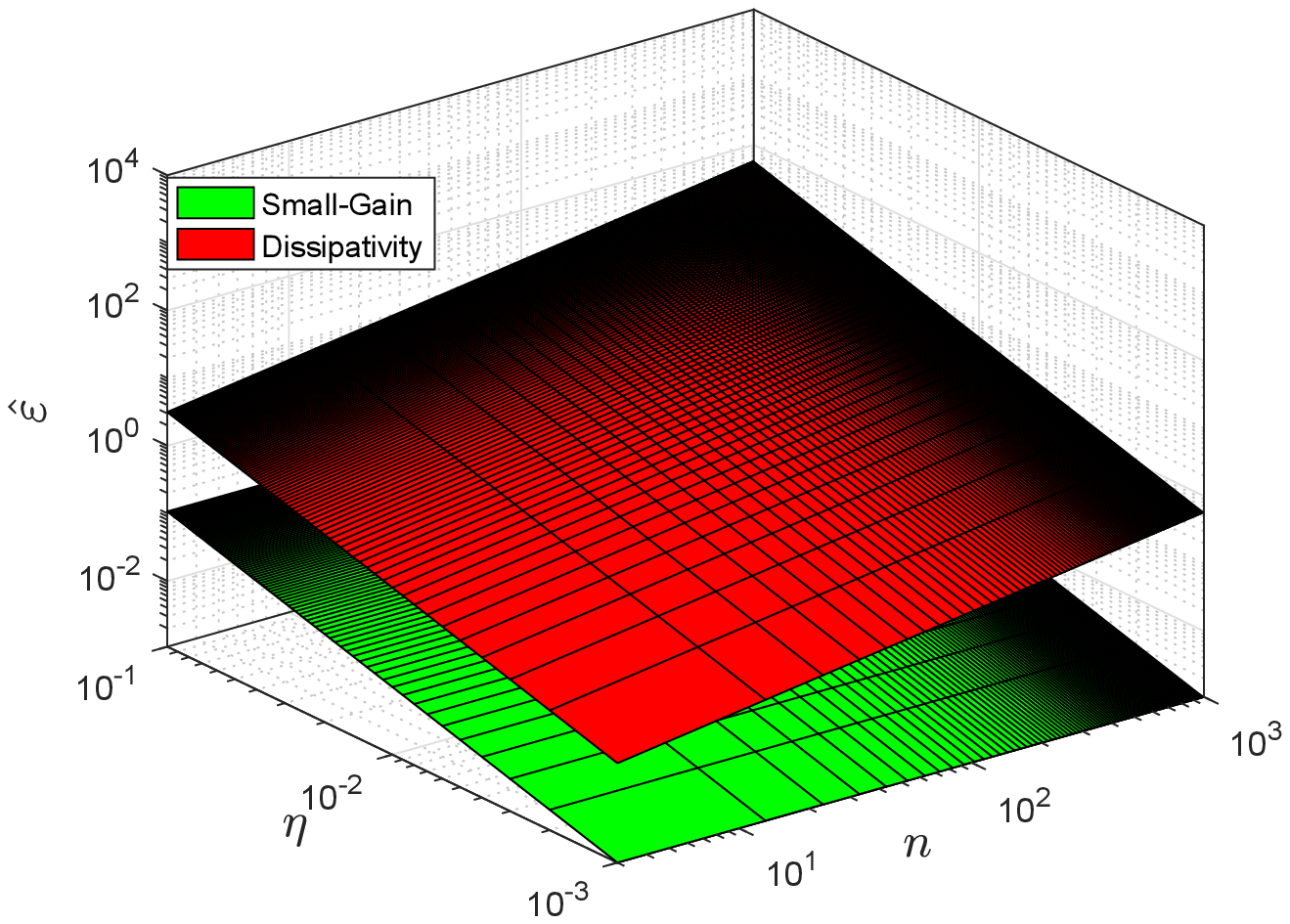}
		\caption{Temperature control: Comparison of errors
			in \eqref{er} resulted from our approach based on small-gain
			condition with those resulted from the approach in \cite{arxiv} based on
			dissipativity-type condition for different values of $n\geq3$ and $\eta_i$.}
		\label{tc}
	\end{center}
\end{figure}

Now, we synthesize a controller for $\Sigma$ via abstractions $\hat{\Sigma}_i$ such that the temperature of each room is maintained in
the comfort zone $\mathcal{S=}[19,~21]$. The idea here is to design local controllers
for abstractions $\hat{\Sigma}_i$, and then refine them to concrete subsystems $\Sigma_i$. To do so, the local controllers are synthesized
while assuming that the other subsystems meet their safety specifications. This approach, called assume-guarantee reasoning, allows for the compositional synthesis of controllers as well. The computation times for constructing abstractions and synthesizing controllers for $\Sigma_i$ are $0.048$s and $0.001$s, respectively.  Figure \ref{st} shows the state trajectories of the closed-loop system $\Sigma$, consisting of $1000$ rooms, under control inputs $u_i$ with the state and input quantization parameters $\eta_i=0.01$ and $\mu_{i}=0.01$, $\forall i\in [1;1000]$, respectively.
\begin{figure}
	\begin{center}
		\includegraphics[height=7.0cm]{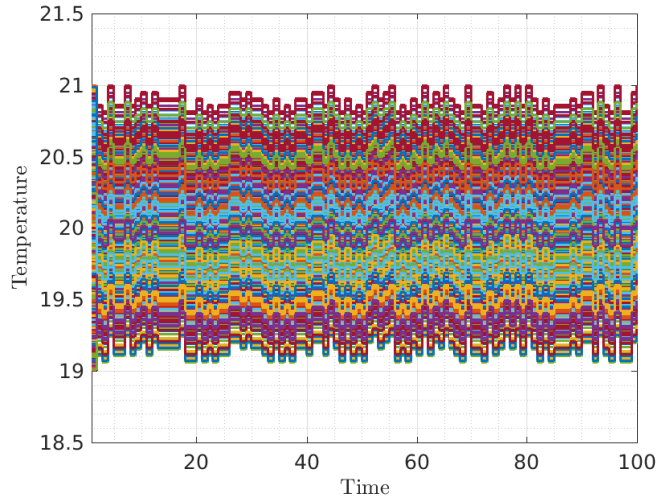}
		\caption{State trajectories of the closed-loop system $\Sigma$ consisting of $1000$ rooms.}
		\label{st}
	\end{center}
\end{figure}
\subsection{Fully Connected Network}\label{1}
In order to show the applicability of our approach to strongly
connected networks, we consider a nonlinear control system $\Sigma$ described by 
\begin{align*}
\Sigma:\left\{
\begin{array}{rl}
\mathbf{x}(k+1)&=A\mathbf{x}(k)+\varphi(x)+\nu(k),\\
\mathbf{y}(k)&=\mathbf{x}(k),
\end{array}\right.
\end{align*}
where $A=I_n-\tau L$ for some Laplacian matrix $L \in\R^{n\times n}$ of an undirected graph \cite{Godsil}, and constant $0<\tau <1/\Delta$, where $\Delta$ is the maximum degree of the
graph \cite{Godsil}. Moreover $\mathbf{x}(k)=[\mathbf{x}_1(k);\ldots;\mathbf{x}_n(k)]$, $\nu(k)=[\nu_1(k);\ldots;\nu_n(k)]$, and $\varphi(x)=[\varphi_1(x_1);\ldots;\varphi_n(x_n)]$, where $\varphi_i(x_i)=sin(x_i), \forall i\in[1;n]$. Assume $L$ is the Laplacian matrix of a complete graph:
\begin{align*}
\begin{array}{rl}
L=\begin{bmatrix}n-1 & -1 & \cdots & \cdots & -1 \\  -1 & n-1 & -1 & \cdots & -1 \\ -1 & -1 & n-1 & \cdots & -1 \\ \vdots &  & \ddots & \ddots & \vdots \\ -1 & \cdots & \cdots & -1 & n-1\end{bmatrix}.
\end{array}
\end{align*}
Now, by introducing $\Sigma_i$ described by
\begin{align*}
\Sigma_i:\left\{
\begin{array}{rl}
\mathbf{x}_i(k+1)&=a_i\mathbf{x}_i(k)+\varphi_i(x_i)+d_i\omega_i(k)+\nu_i(k),\\
\mathbf{y}_{i}(k)&=\mathbf{x}_i(k),
\end{array}\right.
\end{align*}
where $a_i\!=\!\{A\}_{ii}$, $\omega_i(k)\!=\![\mathbf{y}_{i1};\!\ldots\!;\mathbf{y}_{i(i-1)};\mathbf{y}_{i(i+1)};\!\ldots\!;\mathbf{y}_{in}]$, $d_i=[\{A\}_{i1};\ldots;\{A\}_{i(i-1)};\{A\}_{i(i+1)};\ldots;\{A\}_{in}]^T$, one can readily verify that $\Sigma=\mathcal{I}_{0_n}(\Sigma_1,\ldots,\Sigma_n)$.
Clearly, for any $i\in[1;n]$, conditions \eqref{eq:ISTFC1} and \eqref{eq:ISTFC2} are satisfied with $\mathcal{G}_i(x_i,\hat{x}_i)=\Vert x_i-\hat x_i\Vert$, $\mathcal{H}_i(x_i)=-c_ix_i$, where $\frac{a_i+1}{2}<c_i<a_i+1$, $\underline{\alpha}_{i}(s)=\overline{\alpha}_{i}(s)=s$, $\kappa_i(s)= \left(1-(1+a_i-c_i)\right)s$, $\gamma_{iint}(s)=\Vert d_i \Vert s$, and $\gamma_{iext}(s)=0$, $\forall s\in\R_{\ge0}$. Note that \eqref{eq:TI} is satisfied with $\hat{\gamma}=\mathcal{I}_d$. Consequently, $\mathcal{G}_i(x_i,\hat{x}_i)=\Vert x_i-\hat x_i\Vert$ is an alternating simulation function from $\hat\Sigma_i$,  constructed as in Definition \ref{def:sym}, to $\Sigma_i$.

Fix $\tau=\frac{0.1}{\Delta}=\frac{0.1}{n-1}$, and let, $\forall i\in[1;n]$, the $\mathcal{K}_{\infty}$ functions $\lambda_i$, $\psi_i$, and $\hat{\kappa}_i$ in the proof of Theorem \ref{thm:2} be as follows: $\lambda_i=\mathcal{I}_d$, $\psi_i(s)=0.99s$, $\hat{\kappa}_i=\kappa_i$. Since we have $\gamma_{ij}(s)<\mathcal{I}_d$, $\forall i,j\in[1;n],~ i\neq j$, the small-gain condition \eqref{SGC} is satisfied without any restriction on the number of subsystems.
Using the results in Theorem \ref{thm:3} with $\delta^{-1}_{i}=\mathcal{I}_d~ \forall i\in [1;n]$, one can verify that $V(x,\hat{x})=\max_{i}\{\Vert x_i-\hat x_i\Vert\}$ is an alternating simulation function from $\hat{\Sigma}={\mathcal{I}}_{0_n}(\hat\Sigma_1,\ldots,\hat\Sigma_n)$ to $\Sigma$
satisfying conditions \eqref{e:SFC11a} and \eqref{e:SFC22b} with $\tilde{\alpha}(s)=s$, $\tilde{\rho}_{ext}(s)=0$, $\forall s\in\R_{\ge0}$, $\tilde{\varepsilon}=\max_{i}\left\{\frac{2.02\eta_i}{1-(1+a_i-c_i)}\right\}$, $\tilde{\sigma}(s)\!=\!\max\left\{\!\max\limits_{i}\!\left\{\!\left(1\!-\!\frac{(1-(1+a_i-c_i))}{10^{2}}\right)s\right\}\!,\max\limits_{i}\left\{\!\frac{2.02\Vert d_i \Vert}{1-(1+a_i-c_i)} s\!\right\}\!\right\}\!,$ where $\eta_i$ is the state set quantization parameter of abstraction $\hat\Sigma_i$.

Similar to the previous case study, in order to compare our compositional technique to the one proposed in \cite{arxiv}, we have assumed that $\mathbb{\hat{Y}}_{ji}= \mathbb{\hat W}_{ij}$, i.e. $\varpi_{ij}=0$, $\forall i,j \in[1;n],i\neq j$.
A comparison of the error $\hat\varepsilon$ in \eqref{er} resulted from the dissipativity approach in \cite{arxiv} and the small-gain approach here is shown in Figure \ref{lap}. We compute $\hat \varepsilon$ for different number of subsystems $n$ and different values of the state set quantization parameters $\eta_i$ for abstractions $\hat\Sigma_i,~\forall i\in [1;n]$. 
Clearly, the small-gain approach results in less mismatch errors than those obtained using the dissipativity based approach in \cite{arxiv}. 

The computation time for constructing abstractions for $\Sigma_i$ is $0.9$s after fixing $n=1000$, $\eta_i=0.01$, $\mu_{i}=0.01$, $x_i\in[0,10]$, $\nu_i\in[0,1]$, $\forall i\in [1;n]$.
\begin{figure}
	\begin{center}
		\includegraphics[height= 7.0cm]{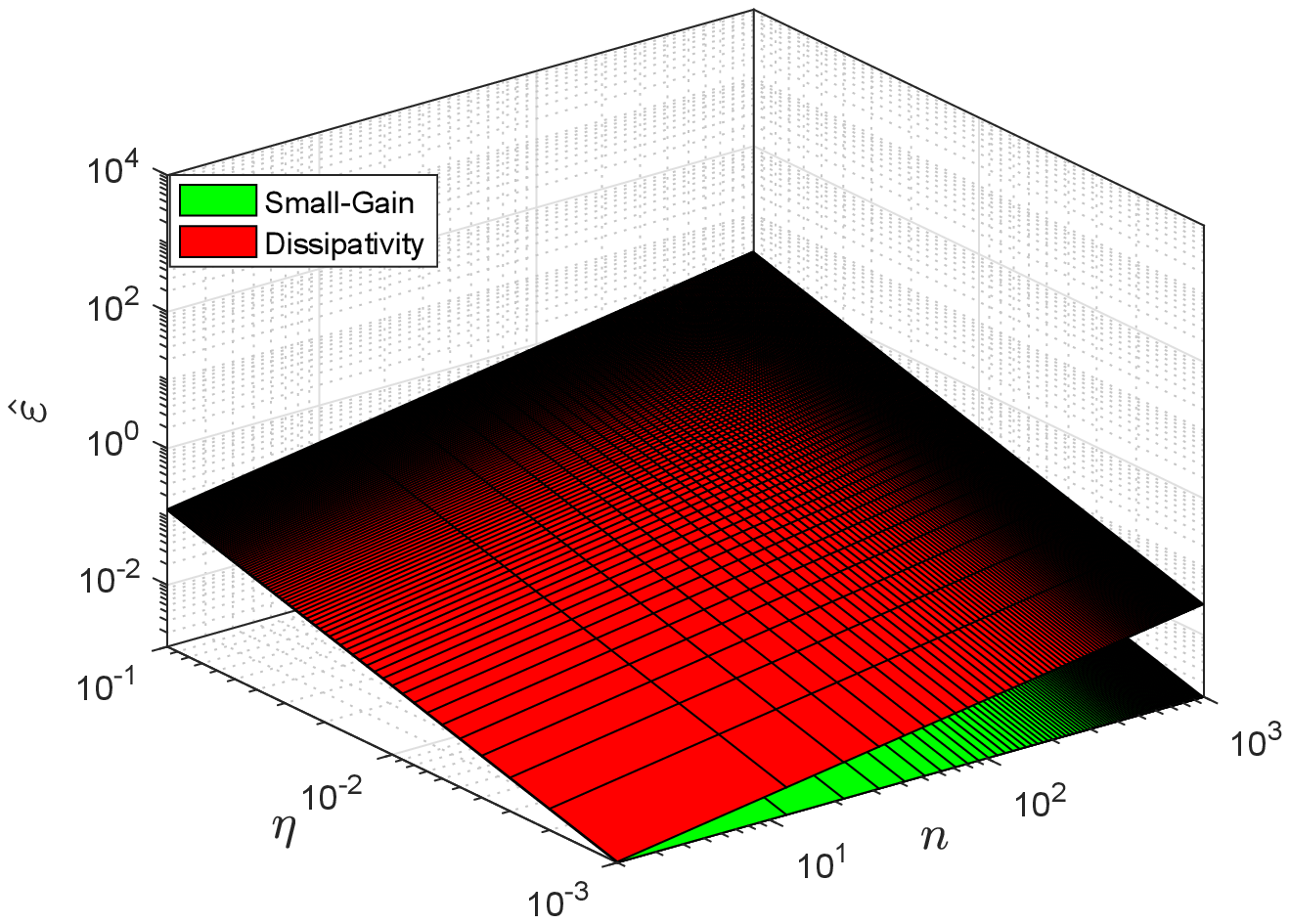}
		\caption{Fully connected network: Comparison of errors
			in \eqref{er} resulted from our approach based on small-gain
			condition with those resulted from the approach in \cite{arxiv} based on
			dissipativity-type condition for different values of $n\geq1$ and $\eta_i$.}
		\label{lap}
	\end{center}
\end{figure}
\section{Conclusion}
In this paper, we proposed a compositional framework for the construction of finite abstractions of interconnected discrete-time control systems. First, we used a notion of so-called alternating simulation functions in order to construct compositionally an overall alternating simulation function that is used to quantify the error between the output behavior of the overall interconnected concrete system and the one of its finite abstraction. 
Furthermore, we provided a technique to construct finite abstractions together with their corresponding alternating simulation functions for discrete-time control systems under incremental input-to-state stabilizability property.
Finally, we illustrated the proposed results by constructing finite abstractions of two networks of (linear and nonlinear) discrete-time control systems and their corresponding alternating simulation functions in a compositional fashion. We elucidated the effectiveness of our compositionality results in comparison with the existing ones using dissipativity-type reasoning.

\bibliographystyle{alpha}       
\bibliography{j1}           



\appendix
\end{document}